\documentclass[11pt]{article}
\usepackage[utf8]{inputenc}
\usepackage[T1]{fontenc}
\usepackage{latexsym}
\usepackage{amssymb} 
\usepackage{mathrsfs}
\usepackage{amsmath}
\usepackage{latexsym}
\usepackage{delarray}
\usepackage{amssymb,amsmath,amsfonts,amsthm,mathrsfs}
\usepackage{scalerel}
\usepackage{tabularx}
\usepackage[table]{xcolor}
\usepackage{float} 
\usepackage{overpic} 
\usepackage{caption} 
\usepackage{graphicx}
\usepackage{pgfplots}
\pgfplotsset{compat=1.17} 

\usepackage{tikz}
\usetikzlibrary{shapes.geometric, arrows.meta, positioning}

\usepackage[font=small,labelfont=bf]{caption}

\newcommand\reallywidehat[1]{\arraycolsep=0pt\relax%
\begin{array}{c}
\stretchto{
  \scaleto{
    \scalerel*[\widthof{\ensuremath{#1}}]{\kern-.5pt\bigwedge\kern-.5pt}
    {\rule[-\textheight/2]{1ex}{\textheight}} 
  }{\textheight} %
}{0.5ex}\\           
#1\\                 
\rule{-1ex}{0ex}
\end{array}
}

\usepackage{hyperref}
\renewcommand\eqref[1]{(\ref{#1})} 

 \newtheorem{thm}{Theorem}[section]
 \newtheorem{cor}[thm]{Corollary}
 \newtheorem{lem}[thm]{Lemma}
 \newtheorem{prop}[thm]{Proposition}
 \theoremstyle{definition}
 \newtheorem{defn}[thm]{Definition}
 \theoremstyle{remark}
 
 \newtheorem{ex}[thm]{Example}
 \numberwithin{equation}{section}

\newcommand{\bi}{\begin{itemize}}
\newcommand{\ei}{\end{itemize}}
\newcommand{\be}{\begin{enumerate}}
\newcommand{\ee}{\end{enumerate}}
\newcommand{\beq}{\begin{equation}}
\newcommand{\eq}{\end{equation}}

\def\SU2{{{\rm SU(2)}}}
\def\SO3{{{\rm SO(3)}}}
\def\lapsu2{{{\mathcal L}_\SU2}}

\begin{document}
\title{On The Fast Fourier Transform on SU(2)}

\author{Julio Delgado\thanks{Departamento de Matem\'aticas, Universidad del Valle; \texttt{delgado.julio@correounivalle.edu.co}} \and Alejandro Uma\~na\thanks{Departamento de Matem\'aticas, Universidad del Valle; \texttt{alejandro.umana@correounivalle.edu.co}}}

\date{\today}
\maketitle
\begin{abstract}
The special unitary group SU(2) plays a fundamental role in the description of symmetries in quantum mechanics, theoretical physics, and spherical signal processing. In this paper, we address the computational challenges of performing spectral analysis on this non-abelian compact Lie group. We present the Fourier Transform (FT) on SU(2) and develop a Fast Fourier Transform (FFT) algorithm inspired by the classical Cooley-Tukey divide-and-conquer scheme. Our approach efficiently discretizes the group using Euler angles, applying a two-dimensional FFT on the angular variables and exploiting the recursive properties of Jacobi polynomials. We provide an analysis of the computational complexity, demonstrating that our FFT-based method significantly outperforms the direct computation of the FT. This algorithm serves as a foundational tool for understanding the implementation of the FFT on SU(2), a key component in numerical simulations and advanced data analysis for high-performance computing applications on curved manifolds and quantum systems.
\end{abstract}
\noindent\textbf{Mathematics Subject Classification (2020):} 65T50, 22E30, 33C45.

\medskip
\noindent\textbf{Keywords:} Compact Lie groups, SU(2), Fourier Transform, Fast Fourier Transform.

\bigskip
\section{Introduction}

In this work we develop an algorithm that follows the classical Cooley-Tukey divide-and-conquer scheme of the standard Fast Fourier Transform (FFT). Our construction is inspired by fast Fourier transform techniques on the 2-Sphere  and by the Quantum Fourier Transform (QFT) on SU(2), from the developments in a purely classical setting in the Theorem \ref{thm:final_legendre_fft} (cf. \cite{2-sphere},\cite{Quantum_SU(2)}). 
We also analyze and compare the computational complexity of the direct FT and the proposed FFT-based method on SU(2). 
Crucially, we show that by employing Propositions \ref{P_1} and \ref{P_2}, we can prove Theorem \ref{TEO_FIN}. This theorem establishes a method to reduce the computational complexity of the direct transform from $\mathcal{O}(N^6)$ to $\mathcal{O}(N^4)$.\\

The Quantum Fourier Transform is the quantum analogue of the  discrete Fourier transform and due to the recent intensive research on quantum computation, the interest in Fast Fourier transforms on special groups arises (cf. \cite{ft-QFT},  \cite{Bastidasquantum}). Recently, it
has been shown that QSP sequences for su(2) and su(1,1) are
intimately related to the nonlinear Fourier (\cite[p. 1]{bastidas2024complexification}).\\

The Fast Fourier Transform (FFT) was commemorated with an IEEE Milestone during a ceremony held in May 2025 at Princeton University (cf. \cite{FFTmile}). This computer algorithm is found in just about every electronic device and it  has become an important tool for manipulating and analyzing signals in many areas including audio processing, telecommunications, digital broadcasting, and image analysis. By exploiting  algebraic properties and periodicities, the FFT reduced the number of the operations, making it particularly and practically feasible for everyday tasks, replacing the less efficient analog methods (cf.  \cite{Cooley-Tukey}, \cite{FFT_Aplications}).\\%

The history of the Fourier Transform (FT) dates back to the early 19th century, when the French mathematician Jean-Baptiste Joseph Fourier, during Napoleon Bonaparte's military campaign in Egypt (1798--1801), began studying the propagation of heat in solid media (cf. \cite{History_Fourier}). During this expedition, Fourier was appointed secretary of the \textit{Institut d'Égypte}, where he combined administrative and diplomatic duties with intense scientific activity. It was in this context that he developed the revolutionary idea of representing general periodic functions as sums of simple waves, thus laying the groundwork for what would later become known as the Fourier Series (cf. \cite{History_Fourier,FT_History_Importance}).\\%

After returning to France, Fourier consolidated these concepts and, in 1807, presented his seminal work on the analytical theory of heat, formally establishing the principle that we now call the Fourier Transform (FT) (cf. \cite{History_Fourier, FT_History_Importance}). With the advancement of technology and the increasing availability of data throughout the 20th century, the FT gained enormous practical importance, particularly during the Cold War. In the 1960s, one of its most strategic applications arose in the context of detecting covert nuclear tests (cf. \cite{History_Cold_War,Cooley-Tukey}). Underground detonations generated seismic waves that propagated through the Earth's crust; these signals were recorded by sensors and analyzed using the Discrete Fourier Transform (DFT) to detect frequency patterns characteristic of nuclear explosions (cf. \cite{Nuclear_Detection_FFT}). However, the direct computation of the DFT has a computational complexity of
\[
\mathcal{O}(N^2),
\]
which hindered its efficient use on a large scale (cf. \cite{oppenheimN^2, whatistheFFT,Cooley-Tukey}).\\

This need for efficiency led James Cooley and John Tukey in 1965 to develop a faster algorithm: the Fast Fourier Transform (FFT), which reduces the number of operations to
\[
\boldsymbol{\mathcal{O}(N \log_2 N),}
\]
by exploiting symmetries and recursive structures in the data (cf. \cite{Cooley-Tukey,FFThistorical}). Interestingly, Carl Friedrich Gauss had already outlined a similar method nearly a century earlier, although his work was not recognized as a precursor to the FFT until much later (cf. \cite{GAUSS}).\\%


Today, the FFT remains a widely used algorithm. Its efficiency in computing the transform of a time series makes it a fundamental technique in modern technology, enabling critical applications such as faster data transmission (cf. \cite{whatistheFFT}).\\%

To illustrate the significant difference between the Fourier Transform (FT) and the Fast Fourier Transform (FFT), let us consider a signal composed of $N = 2^{30}$ samples. Assuming that each operation takes 1 nanosecond, the total time required to compute the FT would be approximately 13,343 days. In contrast, using the FFT drastically reduces the number of operations, bringing the total computation time down to approximately 64 seconds.
\begin{table}[H]
    \centering
     \caption{Comparison between the number of operations required by the FT and the FFT for various values of \( N \).}
    \begin{tabularx}{\textwidth}{|c|X|X|}
        \hline \rowcolor{red!15} 
        \textbf{\( N \)} & \textbf{ \textbf{FT Operations ($\boldsymbol{N^2}$)}} & \textbf{\textbf{FFT Operations ($\boldsymbol{N \log _2N}$)}} \\
        \hline
        \rowcolor{gray!10} \( 2^2 \) & 16 & 8  \\
        \( 2^4 \) & 256 & 64 \\
        \rowcolor{gray!10} \( 2^8 \) & 65536 & 2048  \\
        \( 2^{16} \) & 4294967296 & 1048576 \\
        \hline
    \end{tabularx}
\end{table}

The FFT is also an useful tool in engineering, physics, and computational sciences. Its applications range from signal and image processing, data compression, and audio synthesis to the numerical solution of partial differential equations and the simulation of dynamic systems (cf. \cite{FFT_Aplications,FFTphoto}).\\%

In the classical case, the typical domain for the Fourier Transform is the torus
$\mathbb{T}$, which, as a compact abelian group, has the property that all of its irreducible representations are one-dimensional, known as characters (cf. \cite{Toro1,Toro2,stein2011fourier}). This algebraic simplicity is crucial
which leads to the Discrete Fourier Transform (DFT) when the function is sampled
(cf. \cite{ruzhansky2009pseudo, Toro1}).\\%

However, a natural question arises: what happens when the group is non-abelian? In such cases, as with the group
SU(2), the irreducible representations are no longer one-dimensional, but instead matrices of size $(2l+1) \times (2l+1)$ for half-integer $l \in \tfrac{1}{2}\mathbb{N}_0$. 
The group SU(2) plays a central role in various areas of theoretical physics, particularly in quantum mechanics and quantum computing (cf. \cite{bastidas2024complexification, Bastidasquantum}). Its structure as a compact, non-abelian Lie group and the matrix nature of its irreducible representations make SU(2) an ideal case for extending the Fourier Transform to non-commutative settings (cf.\cite{Faraut_2008}).\\%

In this paper, we develop a Fast Fourier Transform (FFT) algorithm specifically adapted to the group SU(2). 
The proposed method follows the classical Cooley--Tukey divide-and-conquer scheme, suitably modified to account for the non-abelian group structure, and allows for a detailed analysis of the arithmetic operation count required for its execution. Beyond its theoretical relevance, the study of the Fourier Transform on SU(2) provides a foundation for the design of fast algorithms in non-commutative settings, with potential applications in quantum signal processing, physical simulations, and quantum information theory (cf.\cite{Cite_Aplications_inverse, Quantum_SU(2)}).\\

We now give an outline of the organization of the paper. In Section \ref{SEC:Prelim}, we review the basic elements of Fourier analysis on the torus to provide the necessary historical and conceptual background \cite{Toro1,Toro2, Cooley-Tukey}. In Section \ref{SEC:FAus2}, we specialize the analysis to SU(2), establishing the group-theoretic foundations, the Peter-Weyl theorem, and the specific discretization via Euler angles \cite{Faraut_2008,ruzhansky2009pseudo,Preli}. In Section \ref{sec:FFTSU2}, we present the core of contributions in this work: the construction of the Fast Fourier Transform (FFT) algorithm on SU(2). This section details the divide-and-conquer scheme and the recursive relations of Jacobi polynomials that enable the acceleration \cite{2-sphere,Quantum_SU(2),FFT_Compact_Groups}. Section \ref{sec:order-FT} is devoted to a rigorous computational complexity analysis of the direct transform FT, where we establish its asymptotic order. Section \ref{sec:order-fftsu2} extends this analysis to our proposed algorithm, proving the reduction in operation count and contrasting it with the direct method. Finally, in Section \ref{sec:ftvsfft}, we illustrate the computational gap between the standard FT and the proposed FFT.


\section{Preliminaries}  
\label{SEC:Prelim}
In this section, we start by recalling some basics on the Fourier transform on the torus \( \mathbb{T} \) and the compact Lie group SU(2), both through the lenses of representation theory. In each case, we employ the Peter–Weyl theorem to describe the corresponding Fourier expansion in terms of irreducible unitary representations. For further background, we refer the reader to (cf. \cite{FFT_Compact_Groups}).
\medskip

Let us begin by introducing the basic framework of representation theory for compact Lie groups, which will serve as a foundation for understanding the Fourier transform both on the torus and on the group SU(2). Our approach relies on unitary representations and culminates in the Peter–Weyl Theorem, which enables the decomposition of square-integrable functions on a compact group in terms of its irreducible representations.\\

Let $G$ be a compact group, and let $\mathcal{V}$ be a normed vector space over $\mathbb{R}$ or $\mathbb{C}$. We equip $G$ with its normalized Haar measure, denoted by $\mu \equiv dx$. The linear group of all invertible linear operators on $\mathcal{V}$ is denoted by $GL(\mathcal{V})$.

\begin{defn}
    A \textit{representation} of $G$ on $\mathcal{V}$ is a group homomorphism
\[
\pi : G \longrightarrow GL(\mathcal{V}),
\]
such that the map $g \mapsto \pi(g)v$ is continuous for every $v \in \mathcal{V}$. 
\end{defn}
The pair $(\pi, \mathcal{V}_\pi)$ denotes the representation, where $\mathcal{V}_\pi$ is referred to as the \textit{representation space}, and $d_\pi = \dim(\mathcal{V}_\pi)$ is called the \textit{dimension} or \textit{degree} of the representation. 
Let $\{e_1, e_2, \dots, e_{d_\pi}\}$ be an orthonormal basis of $\mathcal{V}_\pi$. Then the operator $\pi(g)$ can be written as a matrix with respect to this basis
\[
\pi(g) = \begin{pmatrix}
\pi_{1,1}(g) & \pi_{1,2}(g) & \dots & \pi_{1,d_\pi}(g) \\
\pi_{2,1}(g) & \pi_{2,2}(g) & \dots & \pi_{2,d_\pi}(g) \\
\vdots & \vdots & \ddots & \vdots \\
\pi_{d_\pi,1}(g) & \pi_{d_\pi,2}(g) & \dots & \pi_{d_\pi,d_\pi}(g)
\end{pmatrix},
\]
where the entries $\pi_{i,j}(g) := \langle \pi(g) e_j, e_i \rangle_{\ell^2(G)}$ are called the \textit{matrix coefficients} of the representation.  For fixed indices $i,j$, each $\pi_{i,j}$ defines a complex-valued function on $G$, and one can compute the inner product of two such functions as
\[
\langle \pi_{i,j}, \pi_{k,l} \rangle_{\ell^2(G)} = \int_G \pi_{i,j}(g) \overline{\pi_{k,l}(g)}\, d\mu(g) = \frac{1}{d_\pi} \delta_{j,l} \delta_{i,k},
\]
where $\delta$ is the Kronecker delta. This orthogonality relation implies that the set $\{ \pi_{i,j} \}$ forms an orthogonal system in $\ell^2(G)$ (cf. \cite{Preli}).

\begin{defn}
    Let $M_\pi$ denote the subspace of $\ell^2(G)$ spanned by the matrix coefficients $\{\pi_{i,j} : 1 \leq i,j \leq d_\pi\}$.
\end{defn}
Since these functions are pairwise orthogonal and satisfy $\| \pi_{i,j} \|^2 = \frac{1}{d_\pi}$, the normalized set
\[
\left\{ \sqrt{d_\pi}\, \pi_{i,j} : 1 \leq i,j \leq d_\pi \right\}
\]
constitutes an orthonormal basis for $M_\pi$ (cf. \cite{Preli}).

We denote by $\widehat{G}$ the set of equivalence classes of irreducible unitary representations of $G$. Since $G$ is compact, the set $\widehat{G}$ is discrete  (cf. \cite{DualDiscreto1939}). The Peter–Weyl theorem provides a complete characterization of $\ell^2(G)$ in terms of these irreducible representations (cf. \cite{Preli,Faraut_2008}).

\begin{thm}[Peter–Weyl]\label{PETER}
Let $G$ be a compact group. Then
\[
\ell^2(G) = \widehat{\bigoplus}_{\pi \in \widehat{G}} M_\pi,
\]
where the sum is orthogonal and complete in $\ell^2(G)$.
\end{thm}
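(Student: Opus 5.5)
The plan has three parts: show the spaces $M_\pi$ are mutually orthogonal, show each is finite-dimensional and invariant under translations, and show their algebraic sum is dense in $\ell^2(G)$.

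\textbf{Orthogonality.} For inequivalent $\pi,\rho\in\widehat{G}$ we use Schur's lemma. For any linear map $A:\mathcal{V}_\rho\to\mathcal{V}_\pi$, consider the averaged operator
\[
T_A=\int_G \pi(g)A\rho(g)^{-1}\,d\mu(g).
\]
By invariance of the Haar measure it intertwines $\rho$ and $\pi$. Since the representations are irreducible and inequivalent, $T_A=0$. Taking $A=E_{jl}$ (the matrix units) and reading off entries gives $\langle \pi_{i,j},\rho_{k,l}\rangle=0$. The same averaging with $\pi=\rho$ gives $T_A=\frac{\operatorname{tr}A}{d_\pi}I$, which is the orthogonality relation with constant $1/d_\pi$ recalled above. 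So the sum $\bigoplus_\pi M_\pi$ is orthogonal. Each $M_\pi$ is finite-dimensional, hence closed, and it is invariant under left and right translations.

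\textbf{Density.} This is the main obstacle; the strategy is to approximate by compact operators.

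1. \emph{Reduction to continuous functions.} Suppose $f\in\ell^2(G)$ is orthogonal to every $M_\pi$. Convolving with an approximate identity $\varphi_n$, where each $\varphi_n\ge0$ is continuous and symmetric, gives continuous functions $f*\varphi_n\to f$ in $\ell^2$. These remain orthogonal to every $M_\pi$, because the $M_\pi$ are translation-invariant. It therefore suffices to show that a continuous $f$ orthogonal to all $M_\pi$ vanishes.

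2. \emph{The convolution operator.} Fix such a continuous $f$ and set $h=f*f^*$, where $f^*(x)=\overline{f(x^{-1})}$. Then $h$ is continuous, $h(e)=\|f\|^2$, and $h$ is still orthogonal to all $M_\pi$. Define
\[
K\psi=\psi*h.
\]
$K$ is a Hilbert--Schmidt integral operator with continuous kernel, hence compact, and it is self-adjoint. It commutes with left translations.

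3. \emph{Spectral theorem.} For every nonzero eigenvalue $\lambda$ of $K$, the eigenspace $E_\lambda$ is finite-dimensional and left-invariant. It therefore decomposes into irreducible subrepresentations. For any $\psi$ in such a subspace and a basis $\{\psi_k\}$ of it, we have $\psi(gx)=\sum_k c_k(g)\psi_k(x)$. Evaluating suitably shows that $\psi$ is a linear combination of matrix coefficients, so $E_\lambda\subset\bigoplus_\pi M_\pi$.

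4. \emph{Conclusion.} Because $h$ is orthogonal to every $M_\pi$, the function $K h^*$ (or more directly $h$ itself, via $\langle K\psi,\cdot\rangle$ pairings) has zero component in each $E_\lambda$. By the spectral theorem this forces $K$ to vanish on the relevant vectors. Computing $(K\varphi_n)(e)\to h(e)$, or using $K=0$, gives $h(e)=\|f\|^2=0$, so $f=0$.

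\textbf{Main difficulty.} The delicate bookkeeping is in steps 3 and 4. One has to verify that eigenfunctions for nonzero eigenvalues are continuous, so that pointwise evaluation of the translation identity is legitimate; this holds because $K$ maps $\ell^2$ into $C(G)$. One also has to track conjugations to confirm that $K=0$ actually follows from $h\perp M_\pi$ for all $\pi$. Once $K=0$, the identity $0=\|f*f^*\|$ together with $h(e)=\|f\|^2$ finishes the argument.
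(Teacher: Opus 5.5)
The paper does not prove this theorem; it only cites the standard references, so your proposal has to stand on its own. Your route is the standard textbook one: Schur orthogonality, then density via a compact self-adjoint convolution operator whose nonzero eigenspaces consist of matrix coefficients. The orthogonality part and Steps 1--3 are sound. One small point in Step 3: with left translations, evaluating at $e$ writes $\psi$ through \emph{conjugates} of matrix coefficients of $E_\lambda$. So you also need that $\bigoplus_\pi M_\pi$ is closed under complex conjugation, which holds because $\overline{\pi}$ is again irreducible and unitary.

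The genuine gap is Step 4, which is exactly where orthogonality to every $M_\pi$ has to become $f=0$. What your text actually establishes is that $h\perp E_\lambda$ for all $\lambda\neq0$. By the spectral theorem, $h\in\ker K=\bigl(\bigoplus_{\lambda\neq0}E_\lambda\bigr)^{\perp}$, i.e.\ $Kh=0$. You then evaluate $(K\varphi_n)(e)$, but this vanishes only if $K=0$ on all of $\ell^2(G)$, and you explicitly leave $K=0$ unverified. You can close the gap in either of two ways.

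First, finish from $Kh=0$. Here $h*h$ is continuous and $h^*=h$, so $h(y^{-1})=\overline{h(y)}$; this is also what makes $K$ self-adjoint. Hence
\[
0=(h*h)(e)=\int_G h(y)\,h(y^{-1})\,dy=\|h\|^2 ,
\]
which gives $h\equiv0$ and $\|f\|^2=h(e)=0$.

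Second, prove $K=0$ directly. For $\psi\in E_\lambda$ with $\lambda\neq0$,
\[
\lambda\,\psi(x)=(\psi*h)(x)=\int_G h(z)\,\psi(xz^{-1})\,dz=\bigl\langle h,\overline{\psi(x\,\cdot^{-1})}\bigr\rangle=0,
\]
since $\overline{\pi_{ij}(xz^{-1})}=\sum_k\overline{\pi_{ik}(x)}\,\pi_{jk}(z)$ places $z\mapsto\overline{\psi(xz^{-1})}$ in $\bigoplus_\pi M_\pi$. Thus $K$ has no nonzero eigenvalue, so $K=0$ by the spectral theorem. Continuity of the kernel $(x,y)\mapsto h(y^{-1}x)$ then forces $h\equiv0$.

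With either line added, the argument is complete.
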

(see, e.g., \cite{Preli,Faraut_2008}).\\

This result not only yields a decomposition of $\ell^2(G)$ into finite-dimensional subspaces but also provides the foundation for Fourier analysis on compact groups. Each irreducible representation contributes a complete set of orthogonal functions to the space $\ell^2(G)$, and the collection of all such matrix coefficients (properly normalized) forms a global orthonormal  (cf. \cite{peterweylOriginal}).\\

Having introduced the main tools that will be used throughout this paper, we begin by defining the $n$-\textit{dimensional Torus}, which will allow us to recall the Fourier transform in a more classical setting.

\begin{defn}
    The $n$-dimensional Torus, denoted by $\mathbb{T}^n$, is defined as the Cartesian product of $n$ copies of the unit circle
    \[
        \mathbb{T}^n = \underbrace{\mathbb{S}^1 \times \cdots \times \mathbb{S}^1}_{n \text{ times}}.
    \]
\end{defn}

Since $\mathbb{T}^n$ is an abelian compact group, its irreducible unitary representations are one-dimensional (cf. \cite{Faraut_2008,LieHall}). This leads to the following classification

\begin{prop}
    The irreducible representations of $\mathbb{T}^n$ are the continuous homomorphisms $\chi: \mathbb{T}^n \to \mathbb{S}^1$. They are indexed by $\mathbb{Z}^n$ and are defined by
    \[
        \chi_k(x) = e^{2\pi i\, k \cdot x} = \exp\left(2\pi i \sum_{j=1}^n k_j x_j\right),
    \]
    for any $x = (x_1, \dots, x_n) \in \mathbb{T}^n$ and $k \in \mathbb{Z}^n$.
\end{prop}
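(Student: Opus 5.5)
The plan is to reduce everything to the circle $\mathbb{S}^1 = \mathbb{R}/\mathbb{Z}$ and then use the product structure. The first step identifies all continuous homomorphisms $\chi:\mathbb{T}^n\to\mathbb{S}^1$. The preceding paragraph already tells us that every irreducible unitary representation of the abelian compact group $\mathbb{T}^n$ is one-dimensional, by Schur's lemma. A one-dimensional unitary representation is exactly a continuous homomorphism into $U(1)=\mathbb{S}^1$, and two such representations are equivalent only if they coincide, since conjugation in dimension one is trivial. So it suffices to classify these homomorphisms.

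For $n=1$, take a continuous homomorphism $\chi:\mathbb{R}/\mathbb{Z}\to\mathbb{S}^1$ and compose it with the quotient map to get $\tilde\chi:\mathbb{R}\to\mathbb{S}^1$. This map is continuous and satisfies $\tilde\chi(s+t)=\tilde\chi(s)\tilde\chi(t)$ and $\tilde\chi(0)=1$. I would then show $\tilde\chi(t)=e^{2\pi i a t}$ for some real $a$, using the standard averaging argument.
\begin{itemize}
\item Since $\tilde\chi$ is continuous with $\tilde\chi(0)=1$, pick $\delta>0$ with $c:=\int_0^\delta \tilde\chi(s)\,ds\neq 0$.
\item Then $\tilde\chi(t)\,c=\int_t^{t+\delta}\tilde\chi(s)\,ds$, so $\tilde\chi$ is $C^1$.
\item Differentiating the homomorphism identity in $s$ at $s=0$ gives $\tilde\chi'(t)=\tilde\chi'(0)\tilde\chi(t)$, hence $\tilde\chi(t)=e^{\lambda t}$.
\item The condition $|\tilde\chi|=1$ forces $\lambda=2\pi i a$ with $a\in\mathbb{R}$.
\end{itemize}
Periodicity, $\tilde\chi(1)=\tilde\chi(0)=1$, gives $e^{2\pi i a}=1$, so $a=k\in\mathbb{Z}$. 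Conversely, each $e^{2\pi i k x}$ is well defined on $\mathbb{R}/\mathbb{Z}$, continuous, and multiplicative. I expect this regularity step (continuity implies differentiability) to be the only real obstacle, and the integral trick handles it. An alternative is a lifting argument: lift $\tilde\chi$ near $0$ via a local logarithm to an additive continuous map $\mathbb{R}\to\mathbb{R}$, which must be linear.

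For general $n$, write $x=\sum_j x_j e_j$ and note $\chi(x)=\prod_j \chi(x_j e_j)$. Each restriction $x_j\mapsto\chi(x_j e_j)$ is a continuous character of $\mathbb{S}^1$, hence equal to $e^{2\pi i k_j x_j}$ by the case $n=1$. Therefore $\chi=\chi_k$ with $k=(k_1,\dots,k_n)\in\mathbb{Z}^n$.

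Finally, the map $k\mapsto\chi_k$ is injective. Distinct $k$ give characters that are orthogonal in $\ell^2(\mathbb{T}^n)$, since $\int_{\mathbb{T}^n} e^{2\pi i (k-k')\cdot x}\,dx=\delta_{k,k'}$. So $\widehat{\mathbb{T}^n}$ is indexed by $\mathbb{Z}^n$ as claimed, in agreement with the Peter–Weyl decomposition of Theorem \ref{PETER}.
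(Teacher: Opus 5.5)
Your proof is correct. It follows the same line as the paper, which gives no argument beyond the remark that the irreducible unitary representations of the abelian compact group $\mathbb{T}^n$ are one-dimensional, and otherwise takes the classification from the cited literature. You start from that same reduction and supply the standard details the paper omits: the averaging trick showing that a continuous character of $\mathbb{R}/\mathbb{Z}$ is $C^1$ and hence equal to $e^{2\pi i k x}$, the factorization $\chi(x)=\prod_j \chi(x_j e_j)$, and injectivity of $k\mapsto\chi_k$ via orthogonality.
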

The collection $\{ \chi_k \}_{k \in \mathbb{Z}^n}$ forms an orthonormal basis of $\ell^2(\mathbb{T}^n)$ (cf. \cite{Toro1,Toro2}), and this is the classical Fourier basis on the Torus thus, by the Peter–Weyl theorem\ref{PETER}, the function $f \in \ell^2(\mathbb{T}^n)$ can be expressed in terms of the orthonormal basis (\textit{Fourier series}) as follows
\[
f(x) = \sum_{k \in \mathbb{Z}^n} \widehat{f}(k) e^{2\pi i \langle k, x \rangle},
\]
where
\begin{itemize}
    \item  $x \in \mathbb{T}^n \simeq [0,1)^n$,
\item $k \in \mathbb{Z}^n$ is a multi-index,
\item $\langle k, x \rangle = \sum_{j=1}^n k_j x_j$ is the standard Euclidean inner product,
\item and $\widehat{f}(k)$ denotes the \textit{Fourier coefficient} of $f$, given by
\[
\widehat{f}(k) = \int_{\mathbb{T}^n} f(x) e^{-2\pi i \langle k, x \rangle} \, dx.
\]
\end{itemize}
(see, e.g., \cite{ruzhansky2009pseudo}).\\%

In practice, the Fourier transform on the Torus $\mathbb{T}^n$ can be approximated numerically by sampling the function on a uniform grid. Let $f \in \ell^2(\mathbb{T}^n)$ be a square-integrable function, and consider a uniform discretization of the Torus with $N$ points per dimension. Define the grid
\[
x_j := \left( \frac{j_1}{N}, \frac{j_2}{N}, \dots, \frac{j_n}{N} \right), \quad j = (j_1, \dots, j_n) \in \{0, 1, \dots, N-1\}^n.
\]
We denote the discrete samples of $f$ as $f_j := f(x_j)$.

\begin{defn}
    The \textit{Discrete Fourier Transform} (DFT) of $f$ on the Torus is then given by
\[
\widehat{f}(k) := \sum_{j \in \{0, \dots, N-1\}^n} f_j \, e^{-2\pi i \langle k, x_j \rangle}, \quad k \in \{0, \dots, N-1\}^n.
\]
\end{defn}
(see, e.g., \cite{Toro1,Toro2,FFTmultidimensional}).\\%

Computing this expression directly requires evaluating $N^n$ sums, each involving $N^n$ complex multiplications and additions. Therefore, the total computational cost of a direct evaluation is of order $O(N^{2n})$.\\%

Numerically, the discrete version above can be efficiently computed using the \textit{Fast Fourier Transform} (FFT), such as the Cooley–Tukey algorithm (cf. \cite{Cooley-Tukey}). However, due to the separable structure of the DFT, it is possible to compute the transform more efficiently using the Fast Fourier Transform (FFT) algorithm in multiple dimensions (cf. \cite{Cooley-Tukey,Toro2}). This is achieved by applying the one-dimensional FFT successively along each axis of the grid.\\%

For each dimension $d = 1, \dots, n$, the following steps are performed
\begin{itemize}
    \item Fix the $n - 1$ other indices and apply a one-dimensional FFT of size $N$ along the $d$-th axis.
 \item There are $N^{n-1}$ such combinations of fixed indices, requiring $N^{n-1}$ one-dimensional FFTs.
\item  Each of these FFTs has a cost of $O(N \log N)$, so the total cost per dimension is
\[
O(N^{n-1} \cdot N \log N) = O(N^n \log N).
\]
\end{itemize}
Repeating this process for all $n$ dimensions yields a total computational complexity of
\[
O(n N^n \log N).
\]
Since in most applications the number of dimensions $n$ is fixed and relatively small, the complexity simply as
\[
\boldsymbol{O(N^n \log N).}
\]
For further background, we refer the reader to \cite{FFTmulti-general}.\\%

\section{Fourier Analysis on SU(2)}\label{SEC:FAus2}
In this section, we study the Fourier Analysis on SU(2), starting with the definition of its representations and subsequently formulating the corresponding Fourier Transform on SU(2). Further details can be found in \cite{ruzhansky2009pseudo}.
\medskip

Beyond the classical setting of the Torus, it is often of interest to generalize Fourier analysis to more sophisticated compact Lie groups. One of the most fundamental examples in this context is the \textit{special unitary group of degree two}, denoted by $\text{SU}(2)$. Formally, 

 \begin{defn}
The special unitary group $\text{SU}(2)$ is defined as the group of all $2 \times 2$ complex matrices $U$ satisfying
\[
U^\dagger U = I, \quad \text{and} \quad \det(U) = 1,
\]
where $U^\dagger$ denotes the conjugate transpose of $U$, and $I$ is the identity  (cf. \cite{ruzhansky2009pseudo,Faraut_2008}).
 \end{defn}
 
 These conditions imply that $\text{SU}(2)$ is a compact, connected, non-abelian Lie group of real dimension 3 (cf.\cite{Preli}).\\%

In this setting, the Peter–Weyl theorem guarantees that the space $\ell^2(\text{SU}(2))$ decomposes into an orthonormal basis formed by matrix coefficients of the irreducible unitary representations of the group, providing a non-commutative generalization of Fourier series (cf. \cite{ruzhansky2009pseudo}).\\%

We begin by identifying $z= (z_1,z_2) \in \mathbb{C}^2$ with the matrix $z=(z_1 \quad z_2 ) \in \mathbb{C}^{1\times 2}$, and we consider the following map
\begin{align*}
    T: \text{SU(2)}  \rightarrow GL(\mathbb{C}[z_1, z_2]), \quad (T(u)f)(z) &:= f(zu)\\
    &:=f(az_1 + cz_2, bz_1 + dz_2),
\end{align*}
where $u=\begin{pmatrix}
    a & b \\
    c &d
\end{pmatrix}
\in  \text{SU(2)} $ y \( GL(\mathbb{C}[z_1, z_2]) \).

 Note that $T$ is a representation of SU(2) on $\mathbb{C}[z_1,z_2]$, necessarily reducible as $\mathbb{C}[z_1,z_2]$ is infinite \cite[p. 612]{ruzhansky2009pseudo}.

For each $l \in \frac{1}{2}\mathbb{N}_0$, let $\mathcal{V}_l$ be the subspace of $\mathbb{C}[z_1,z_2]$ consisting of homogeneous polynomials of degree $2l \in \mathbb{N}_0$, that is
\[
V_l = \left\{ f \in \mathbb{C}[z_1, z_2] : f(z_1, z_2) = \sum_{k=0}^{2l} a_k z_1^k z_2^{2l-k}, \, \{a_k\}_{k=0}^{2l} \subset \mathbb{C} \right\}.
\]
(see, e.g., \cite{ruzhansky2009pseudo}).

We denoted by $T_l$ the restriction of the $T$ to the $T$-invariant subspace $\mathcal{V}_l$, which has dimension $2l+1 \in \mathbb{Z}^{+}$
\[
T_l: \text{SU(2)}  \rightarrow GL(V_l), \quad (T_l(u)f)(z) = f(zu),
\]
we now recall that $T_l$ is irreducible, unitary with respect to a natural inner product of $\mathcal{V}_l$, and that (up to unitary equivalence) there are no other irreducible unitary representations for SU(2) (\cite[p. 612]{ruzhansky2009pseudo}). By considering each $f\in \mathcal{V}_l$ as a function on SU(2), we endow $\mathcal{V}_l$ with the $\ell^2$-inner product. A natural basis for the vector space $\mathcal{V}_l$ is given by the monomial $\{p_{lk}:k\in0,1,...,2l\}$, where
\begin{align}\label{Polinomio_Plk}
    p_{lk}(z) = z_1^k z_2^{2l-k}.
\end{align} 
(see, e.g., \cite{ruzhansky2009pseudo}).\\

Now that we have constructed an orthonormal basis for the representation spaces $\mathcal{V}_l$, and having defined the unitary irreducible representations of SU(2), we are in a position to state the following theorem
\begin{thm}
Let \( u \in  \text{SU(2)}  \) defined by
\[
u = u(\phi, \theta, \psi) = \begin{pmatrix} 
a & b \\
c & d 
\end{pmatrix} = \begin{pmatrix} 
e^{i(\phi+\psi)/2} \cos \frac{\theta}{2} & e^{i(\phi-\psi)/2} i \sin \frac{\theta}{2} \\
e^{-i(\phi-\psi)/2} i \sin \frac{\theta}{2} & e^{-i(\phi+\psi)/2} \cos \frac{\theta}{2} 
\end{pmatrix}.
\]
Then, the \textbf{matrix coefficients} \( t^l_{mn}(u) \)IQ9 are defined as
\[
t^l_{mn}(u) = \frac{d^{l-m}}{dz_1^{l-m}} \frac{d^{l+m}}{dz_2^{l+m}}  \frac{(z_1a + z_2c)^{l-n} (z_1b + z_2d)^{l+n}}{\sqrt{(l - m)!(l + m)!(l - n)!(l + n)!}}.
\]
\end{thm}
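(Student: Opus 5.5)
Although the statement is phrased as a definition, what has to be proved is that the displayed differential expression is exactly the matrix of \(T_l(u)\) in a suitable orthonormal basis of \(\mathcal{V}_l\). That is, the \(t^l_{mn}(u)\) are genuine matrix coefficients \(\langle T_l(u)e_n,e_m\rangle\). The plan is to fix the normalized monomial basis, compute \(T_l(u)\) on it, and extract coefficients by differentiation.

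\textbf{Step 1: normalized basis.} Reindex the monomials \eqref{Polinomio_Plk} by \(m=-l,\dots,l\), setting
\[
e_m(z)=\frac{z_1^{\,l-m}z_2^{\,l+m}}{\sqrt{(l-m)!\,(l+m)!}} .
\]
We need \(\{e_m\}\) to be orthonormal for an inner product on \(\mathcal{V}_l\) under which \(T_l\) is unitary. The natural choice is the Fischer inner product \(\langle p,q\rangle=\big(p(\partial)\overline{q}\big)(0)\), i.e. \(\langle z^\alpha,z^\beta\rangle=\alpha!\,\delta_{\alpha\beta}\). This is, up to a constant, the \(\ell^2\) inner product on the unit sphere of \(\mathbb{C}^2\), and that is how it matches the \(\ell^2\) structure invoked in the paper. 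Unitarity of \(T_l\) follows because the sphere measure, or equivalently the Fischer product, is invariant under the unitary change of variables \(z\mapsto zu\). I expect this identification to be the main obstacle, since the paper only says ``a natural inner product''. I would settle it by the Gaussian-integral computation \(\int_{\mathbb{C}^2} z^\alpha\bar z^\beta e^{-|z|^2}\,dz=\pi^2\alpha!\,\delta_{\alpha\beta}\), which also exhibits the invariance at once.

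\textbf{Step 2: action of \(T_l(u)\) on the basis.} Since \(zu=(az_1+cz_2,\;bz_1+dz_2)\), we have
\[
(T_l(u)e_n)(z)=\frac{(az_1+cz_2)^{l-n}(bz_1+dz_2)^{l+n}}{\sqrt{(l-n)!(l+n)!}} .
\]
Expand this homogeneous polynomial of degree \(2l\) as \(\sum_m c_{mn}\,z_1^{l-m}z_2^{l+m}\). Applying \(\partial_{z_1}^{l-m}\partial_{z_2}^{l+m}\) kills every monomial except \(z_1^{l-m}z_2^{l+m}\), and turns that one into the constant \((l-m)!(l+m)!\). Therefore
\[
\langle T_l(u)e_n,e_m\rangle=c_{mn}\sqrt{(l-m)!(l+m)!}
=\frac{\partial_{z_1}^{l-m}\partial_{z_2}^{l+m}\,(T_l(u)e_n)}{\sqrt{(l-m)!(l+m)!}} .
\]
Substituting the expression for \(T_l(u)e_n\) gives exactly the displayed formula for \(t^l_{mn}(u)\).

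\textbf{Step 3: Euler-angle form.} Finally, the parametrization \(u(\phi,\theta,\psi)\) does lie in SU(2): one checks \(|a|^2+|b|^2=1\), \(d=\bar a\), \(c=-\bar b\) up to the stated conventions, and \(\det u=\cos^2\frac{\theta}{2}+\sin^2\frac{\theta}{2}=1\). Every \(u\in\) SU(2) arises this way, so the formula defines \(t^l_{mn}\) on the whole group. The statement then follows, with \(t^l_{mn}\) the matrix coefficients of the representation \(T_l\) in the sense of the paper's definition.
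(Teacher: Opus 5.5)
Your argument is correct, but there is no proof in the paper to compare it with. The formula is quoted from Ruzhansky--Turunen essentially as a definition. The surrounding text only says that $\mathcal{V}_l$ carries the $\ell^2$ inner product of functions on $\mathrm{SU}(2)$, and it then calls the monomials $p_{lk}$ an orthonormal basis without ever normalizing them. Your coefficient-extraction argument therefore supplies the missing justification.

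The one place where you depart from the paper's setup is the inner product. You use the Fischer product, whereas the paper uses $\ell^2(\mathrm{SU}(2))\cong\ell^2(\mathbb{S}^3)$ with normalized measure. On $\mathcal{V}_l$ the two are proportional, since $\int_{\mathbb{S}^3}|z_1^{\,l-m}z_2^{\,l+m}|^2\,d\sigma=(l-m)!\,(l+m)!/(2l+1)!$. Hence the $\ell^2$-orthonormal basis is $\sqrt{(2l+1)!}\,e_m$ rather than $e_m$. You should add the one-line remark that the matrix of $T_l(u)$ is unchanged by a common rescaling of the basis, so both conventions give the same $t^l_{mn}$. What your Gaussian version buys is a one-line proof of unitarity. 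The sphere version needs right-invariance of Haar measure together with the monomial integrals above.

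Two smaller remarks. First, Step 1 is not really the main obstacle. The identity $T_l(u)e_n=\sum_m t^l_{mn}(u)\,e_m$ from Step 2 is pure coefficient extraction and holds for any inner product. Orthonormality is only needed to read it as $\langle T_l(u)e_n,e_m\rangle$ and to conclude that $(t^l_{mn}(u))_{m,n}$ is unitary. Second, in Step 3 the relations $d=\bar a$ and $c=-\bar b$ hold exactly, not just up to conventions. Surjectivity of the Euler parametrization requires ranges such as $\phi\in[0,2\pi)$, $\theta\in[0,\pi]$, $\psi\in[-2\pi,2\pi)$. But that step is dispensable anyway, because the formula is written in terms of the entries $a,b,c,d$ of an arbitrary $u\in\mathrm{SU}(2)$.
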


This result can be further extended using Euler angles, leading to the following expression
\[t_{mn}^l(\phi, \theta, \psi)=P_{mn}^l(\cos(\theta))e^{-i(m\phi+n\psi)},\]
where
\[
P_{mn}^l(x)=c_{mn}^l\frac{(1-x)^{(n-m)/2}}{(1+x)^{(m+n)/2}}\left( \frac{d}{dx}\right)^{l-m}[(1-x)^{l-n}(1+x)^{l+n}],
\]
and the normalization constant  $c_{mn}^l$ is given
\[c_{mn}^l=2^{-l}\frac{(-1)^{l-n}i^{n-m}}{\sqrt{(l-n)!(l+n)!}}\sqrt{\frac{(l+m)!}{(l-m)!}}.\]
(see, e.g., \cite{ruzhansky2009pseudo}).

\begin{defn}
The matrix $(t^l_{mn})_{m,n}$, with indices $m, n$ satisfying $-l \leq m, n \leq l$ and $l - m, l - n \in \mathbb{Z}$, is denoted by $t^l_{mn}$. The standard convention $0! = 1$ is used throughout all expressions.
\end{defn}
According to the Peter–Weyl theorem\eqref{PETER}, the functions $t^l_{nm}$ form an orthonormal basis of $\ell^2(\text{SU}(2))$\cite[p. 629]{ruzhansky2009pseudo}. These functions are defined for $l \in \frac{1}{2} \mathbb{N}_0$, satisfying $-l \leq m, n \leq l$ and $l - m, l - n \in \mathbb{Z}$. Their explicit expression is given by
\[
t^l_{nm}(\omega(\phi, \theta, \psi)) = e^{-i(n\phi + m\psi)} P^l_{nm}(\cos(\theta)),
\]
where $P^l_{nm}(x)$ is defined as
\begin{align}\label{Expresation P_mn^l}
    P_{nm}^l(x)=c^l_{nm}\frac{(1-x)^{(m-n)/2}}{(1+x)^{(n+m)/2}} \left( \frac{d}{dx}\right)^{l-n}[(1-x)^{l-m}(1+x)^{l+m}].
\end{align}
and the normalization constant $c^l_{nm}$ is given
\begin{align*}
    c^l_{nm}=2^{-l}\frac{(-1)^{l-m}i^{m-n}}{\sqrt{(l-m)!(l+m)!}}\sqrt{\frac{(l+n)!}{(l-n)!}} .
\end{align*}
(see, e.g., \cite{ruzhansky2009pseudo}).\\%

It is important to note that, in this context, the ordering of the indices $(m, n)$ has been changed to $(n, m)$ compared to the original formulation of the Peter–Weyl theorem (cf. \cite{ruzhansky2009pseudo}). This modification allows to get a correct definition of  the matrix multiplication between $\hat{f}(l)_{mn}$ and $t^l_{nm}(x)$ in the Fourier coefficients of the equation.\\%

Therefore, the collection
\[
\left\{\sqrt{2l + 1} \; t^l_{nm} : l \in \tfrac{1}{2} \mathbb{N}_0, \; -l \leq m, n \leq l, \; l - m, l - n \in \mathbb{Z} \right\}
\]
forms an orthonormal basis of $\ell^2(\text{SU(2)})$ (cf. \cite{Preli,ruzhansky2009pseudo}). Hence, by the Peter–Weyl theorem\eqref{PETER}, any function $f \in \ell^2(\text{SU(2)})$ can be expressed in terms of the basis functions $t^l_{nm}(x)$ associated with the irreducible representations $l$ of $\text{SU(2)}$
\[
f(x) = \sum_{l \in \frac{1}{2}\mathbb{N}_0} (2l + 1) \sum_{m, n} \hat{f}(l)_{mn} \; t^l_{nm}(x),
\]
where the \textit{Fourier coefficients} $\hat{f}(l)_{mn}$ are given by
\[
\hat{f}(l)_{mn} = \int_{\text{SU(2)}} f(x) \; \overline{t^l_{nm}(x)} \, dx = \langle f, t^l_{nm} \rangle_{\ell^2(\text{SU(2)})}.
\]
(see, e.g., \cite{ruzhansky2009pseudo}).\\%

The basis functions $t^l_{nm}$ satisfy the orthonormality relation
\[
\langle t^l_{nm}, t^{l}_{n'm'} \rangle_{\ell^2(\text{SU(2)})} = \frac{\delta_{mm'} \delta_{nn'}}{2l + 1}.
\]%
(see, e.g., \cite{ruzhansky2009pseudo,Preli}).\\

We are now ready to define  the Fourier Transform on   $\text{SU(2)}$.                   
\begin{defn}
    The \textit{Fourier transform of }$f$ on $\text{SU(2)}$ is the sequence of matrices $\{ \hat{f}(l) \}_{l \in \frac{1}{2} \mathbb{N}_0}$, where each entry of each matrix $\hat{f}(l) \in \mathbb{C}^{(2l+1) \times (2l+1)}$ is defined by
\[
\hat{f}(l)_{mn} := \int_{\text{SU(2)}} f(x) \; \overline{t^l_{nm}(x)} \, dx = \langle f, t^l_{nm} \rangle_{\ell^2(\text{SU(2)})},
\]
for $-l \leq m, n \leq l$, and $l - m, l - n \in \mathbb{Z}$. 
\end{defn}

Let us provide a geometric interpretation of the FT on $\text{SU}(2)$. Recall that in the classical setting, the FT of a function defined over space-time allows its decomposition as a sum of simple waves (an orthonormal basis), whose image is the frequency  (cf. \cite{stein2011fourier}).\\

In the case of $\text{SU}(2)$, since this group is diffeomorphic to a complex 3-sphere $ \mathbb{S}^3$ embedded in $\mathbb{R}^4$ (cf. \cite{Faraut_2008}), any signal defined on this space can likewise be decomposed. The Fourier transform on SU(2) expresses any signal over complex 3-sphere embedded in $\mathbb{R}^4$ as a sum of generalized \textquotedblleft waves\textquotedblright, specifically in this context the functions $t^l_{nm}$, (an orthonormal basis) they are the generalized \textquotedblleft waves\textquotedblright \text{ }over complex 3-sphere embedded in $\mathbb{R}^4$.\\

The image of the FT on SU(2) is no longer a vector of coefficients, but rather a sequence of matrices, which can be interpreted as a \textquotedblleft frequency spectrum\textquotedblright . This FT can be conceptually represented as follows

\begin{figure}[H]
    \centering
    \includegraphics[width=0.5\linewidth]{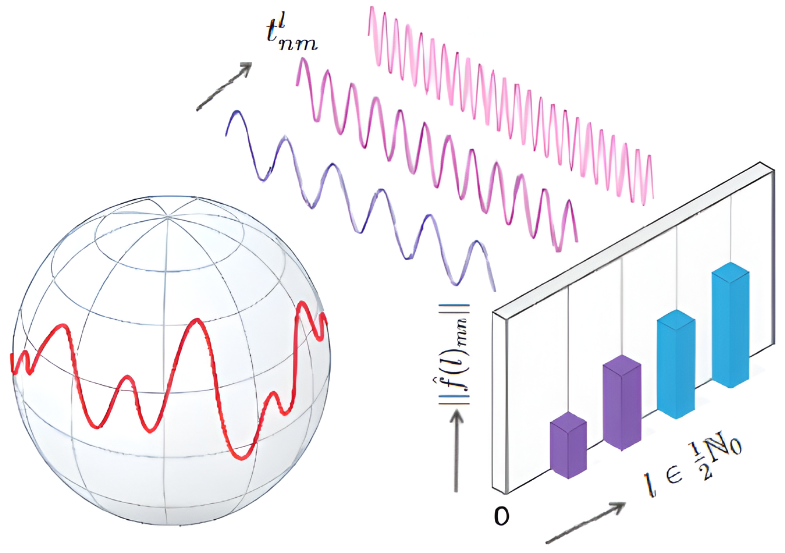}
    \caption{Representation of the FT on SU(2)}
    \label{fig:placeholder}
\end{figure}

This figure illustrates the conceptual analogy between the classical Fourier transform and its counterpart on $\text{SU}(2)$. On the left, we consider a function defined over complex 3-sphere $ \mathbb{S}^3$ embedded in $\mathbb{R}^4$. Through the Fourier transform on $\text{SU}(2)$, this function is decomposed as a sum of generalized \textquotedblleft waves\textquotedblright \text{ }$t^{l}_{nm}$,  which form an orthonormal basis for the space.

On the right, we display a plane where the horizontal axis corresponds to the half-integer parameter $l \in \tfrac{1}{2} \mathbb{N}_0$, and the vertical axis represents the norm of the matrix $\hat{f}(l)_{mn}$ associated with each $l$, i.e., $\|\hat{f}(l)_{mn}\|_F$ (Frobenius norm). This results in a \textquotedblleft frequency spectrum\textquotedblright \text{ }representation given by a sequence of matrices indexed by $l$, capturing the spectral content of the original signal on $\text{SU}(2)$.%

\section{Fast Fourier Transform on SU(2)}\label{sec:FFTSU2}

The objective of this section is to develop an explicit Fast Fourier Transform (FFT) algorithm for the group SU(2) that follows the classical Cooley-Tukey divide-and-conquer scheme. The core idea is to reformulate the Fourier transform on SU(2) in a way that enables the systematic application of FFT techniques, leading to a significant reduction in computational cost compared to the direct approach.

Our construction is inspired by existing FFT methods on the 2-Sphere, in particular those presented in \textit{FFTs for the 2-sphere—Improvements and Variations} (cf. \cite{2-sphere}), as well as by the Quantum Fourier Transform framework developed for SU(2) (cf. \cite{Quantum_SU(2)}). 
While drawing on these ideas, the algorithm presented here is developed in a purely classical setting and tailored specifically to the classical Fourier transform on SU(2).

For a broader theoretical perspective on fast Fourier transforms on compact groups, we refer the reader to \textit{Efficient computation of Fourier transforms on compact groups} (cf. \cite{FFT_Compact_Groups}).

Before proceeding formally with the derivation of the algorithm, we will provide a high-level overview of the intended approach. The following scheme will guide our development
\begin{figure}[H]
\centering
\begin{tikzpicture}[node distance=1.8cm]

\tikzstyle{startstop} = [rectangle, rounded corners, minimum width=4.5cm, minimum height=1.2cm, text centered, draw=black, fill=red!15]
\tikzstyle{process} = [rectangle, minimum width=5.5cm, minimum height=1.5cm, text centered, draw=black, fill=blue!10]
\tikzstyle{arrow} = [thick,->,>=stealth]

\node (start2) [startstop] {FFT on $\text{SU}(2)$};
\node (step1) [process, below of=start2] {
$\hat{f}(l)_{mn} := \int_{\text{SU(2)}} f(x) \; \overline{t^l_{nm}(x)} \, dx$
};
\node (step2) [process, below of=step1] {
$\hat{f}(l) \sim \sum_{k,j_2,j_1} P^l_{nm}(\theta_k) \, e^{i m\phi_{j_2} } \, e^{i n \psi_{j_1}} \, f(\phi_{j_2}, \theta_k, \psi_{j_1})$
};
\node (cond2) [process, below of=step2] {
\(\hat{f}(l) \sim \sum_k \left[\text{FFT over } \phi, \psi \right] \cdot P^l_{nm}(\theta_k)\)
};
\node (result2) [process, below of=cond2] {
$<f,P>=<\cdot,\cdot>+<\cdot,\cdot>$
};



\draw [arrow] (start2) -- (step1);
\draw [arrow] (step1) -- (step2);
\draw [arrow] (step2) -- (cond2);
\draw [arrow] (cond2) -- (result2);

\end{tikzpicture}
\caption{General Scheme for the FFT on SU(2)}
\label{fig:fft_su2}
\end{figure}
For the development of the FFT on $\text{SU}(2)$, a \textit{divide-and-conquer} strategy was adopted in order to preserve the classical structure of the FFT. To this end, the algorithm was organized into four main steps that guide the derivation
\begin{enumerate}
    \item  \textbf{Formal definition:} The process begins with the rigorous definition of the Fourier transform on $\text{SU}(2)$, based on the irreducible representations of the group.

    \item  \textbf{Discretization:} Next, the domain is discretized appropriately, allowing the replacement of integrals with finite sums, which is more suitable for computational purposes.

    \item \textbf{Application of 2D FFT:} A two-dimensional FFT is applied to the angular variables $\phi$ and $\psi$, exploiting their appearance as complex exponentials.

    \item \textbf{Decomposition of the inner product:} Finally, recursive properties of the Jacobi polynomials are used to split the inner product into two independent components, completing the divide-and-conquer structure.
\end{enumerate}
This scheme enables an efficient implementation of the algorithm while maintaining the essence of the classical FFT in the non-commutative setting of $\text{SU}(2)$.

Now that we have a clear conceptual outline, we proceed with the rigorous derivation of the algorithm.

Let us remember that, the \textit{Fourier transform of }$f$ on $\text{SU(2)}$ is the sequence of matrices $\{ \hat{f}(l) \}_{l \in \frac{1}{2} \mathbb{N}_0}$, where each entry of each matrix $\hat{f}(l) \in \mathbb{C}^{(2l+1) \times (2l+1)}$ is defined by
\[
\hat{f}(l)_{mn} := \int_{\text{SU(2)}} f(x) \; \overline{t^l_{nm}(x)} \, dx = \langle f, t^l_{nm} \rangle_{\ell^2(\text{SU(2)})},
\]
for $-l \leq m, n \leq l$, and $l - m, l - n \in \mathbb{Z}$.

Using Euler angles is defined as
\begin{align*}
    \hat{f}(l)_{mn} := \frac{1}{8\pi^2} \int_{-\pi}^{\pi} \int_0^{\pi} \int_{-\pi}^{\pi} f(\phi, \theta, \psi) \, \overline{t^l_{nm}(\phi, \theta, \psi)} \, \sin(\theta) \, d\phi \, d\theta \, d\psi,
\end{align*}
where the orthonormal basis $t_{nm}^l(\phi, \theta, \psi)$, for $l \in \frac{1}{2} \mathbb{N}_0$, $-l \leq m, n \leq l$, and $l - m, l - n \in \mathbb{Z}$, is defined as
\begin{align*}
    t_{nm}^l(\phi, \theta, \psi) = P_{nm}^l(\cos(\theta)) e^{-i(n\phi + m\psi)}.
\end{align*}
Therefore
\begin{align*}
    \hat{f}(l)_{mn} := \frac{1}{8\pi^2}\int_{-\pi}^{\pi} \int_0^{\pi} \int_{-\pi}^{\pi} f(\phi, \theta, \psi) \;P_{nm}^l(\cos(\theta)) e^{i(n\phi + m\psi)}  \sin(\theta) \, d\phi \, d\theta \, d\psi.
\end{align*}
(see, e.g., \cite{ruzhansky2009pseudo}).

The goal is to bring the Fourier transform on $\text{SU(2)}$ into a framework where FFT techniques can be applied to improve computational efficiency. We say that a function $f \in \ell^2(\text{SU(2)})$ is \textit{band-limited} with \textit{band-limit} or \textit{bandwidth} $N \geq 0$ if $\hat{f}(l)_{mn} = 0$ for all $l \geq N$.

Therefore, we begin by uniformly discretizing the angles $\phi$, $\psi$, and $\theta$ in order to leverage the FFT algorithm in the corresponding summations.

\begin{defn}
Let $N \in \mathbb{N}$ be the bandwidth. We define the sampling grid on SU(2) as the set of points $(\phi_{j_1}, \theta_k, \psi_{j_2})$ where:
\begin{align*}
\phi_{j_1} &= -\pi + j_1 \frac{2\pi}{N-1}, \quad j_1 = 0, \dots, N-1, \\
\psi_{j_2} &= -\pi + j_2 \frac{2\pi}{N-1}, \quad j_2 = 0, \dots, N-1, \\
\theta_k &= k \frac{\pi}{N-1}, \quad k = 0, \dots, N-1.
\end{align*}
\end{defn}

Using this discretization, the Fourier coefficient $\hat{f}(l)_{mn}$ can be approximated by a Riemann sum.

\begin{prop}
The computation of the Fourier coefficient $\hat{f}(l)_{mn}$ on the grid factorizes as
\begin{equation}
\hat{f}(l)_{mn} \approx \frac{1}{8\pi^2} \sum_{k=0}^{N-1} P^l_{nm}(\cos\theta_k) \sin\theta_k \left[ \sum_{j_1=0}^{N-1} e^{in\phi_{j_1}} \sum_{j_2=0}^{N-1} e^{im\psi_{j_2}} f(\phi_{j_1}, \theta_k, \psi_{j_2}) \right].
\end{equation}
The term in brackets corresponds to a two-dimensional Discrete Fourier Transform (DFT) with respect to indices $j_1$ and $j_2$.
\end{prop}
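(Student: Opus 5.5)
The plan is to start from the Euler-angle form of the Fourier coefficient given just before the statement,
\[
\hat{f}(l)_{mn} = \frac{1}{8\pi^2}\int_{-\pi}^{\pi}\int_0^{\pi}\int_{-\pi}^{\pi} f(\phi,\theta,\psi)\,P^l_{nm}(\cos\theta)\,e^{i(n\phi+m\psi)}\sin\theta\,d\phi\,d\theta\,d\psi .
\]
Replace the triple integral by a Riemann sum on the sampling grid $(\phi_{j_1},\theta_k,\psi_{j_2})$ of the preceding definition. The integrand is evaluated at each grid node. The product of the mesh widths, $\frac{2\pi}{N-1}\cdot\frac{\pi}{N-1}\cdot\frac{2\pi}{N-1}$, is a constant that does not depend on $(l,m,n)$. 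Following the statement, I will absorb it into the symbol $\approx$, which is read as equality up to this fixed quadrature normalization and up to the discretization error. This gives
\[
\hat{f}(l)_{mn}\approx\frac{1}{8\pi^2}\sum_{k=0}^{N-1}\sum_{j_1=0}^{N-1}\sum_{j_2=0}^{N-1} f(\phi_{j_1},\theta_k,\psi_{j_2})\,P^l_{nm}(\cos\theta_k)\sin\theta_k\,e^{in\phi_{j_1}}e^{im\psi_{j_2}} .
\]

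The factorization is then purely algebraic. The exponential splits as $e^{i(n\phi+m\psi)}=e^{in\phi}e^{im\psi}$. The factor $P^l_{nm}(\cos\theta_k)\sin\theta_k$ depends only on $k$, so it comes out of the two inner sums. The factor $e^{in\phi_{j_1}}$ depends only on $j_1$, so it comes out of the innermost sum over $j_2$. Since the sums are finite, interchanging and regrouping them is legitimate, and this yields exactly the nested expression in the statement.

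To justify calling the bracket a two-dimensional DFT, substitute the grid points explicitly. We have $e^{in\phi_{j_1}} = e^{-in\pi}\,e^{2\pi i n j_1/(N-1)}$, and the same holds for $\psi_{j_2}$ with $m$ in place of $n$. So, up to the unimodular phase $e^{-i(n+m)\pi}$, the bracket is $\sum_{j_1,j_2} f_{j_1 k j_2}\,\omega^{nj_1+mj_2}$ with $\omega=e^{2\pi i/(N-1)}$. For each fixed $k$, this is a two-dimensional DFT (with a sign convention on the exponent) of the slice $f(\cdot,\theta_k,\cdot)$, evaluated at the frequency pair $(n,m)$.

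\textbf{Main obstacle.} The delicate point is not the algebra but matching the grid to the DFT length. The grid has $N$ points with step $2\pi/(N-1)$, so the endpoints $j=0$ and $j=N-1$ both correspond to the same angle $\pm\pi$ modulo $2\pi$. The sum is therefore an $N$-term sum with $(N-1)$-st roots of unity, rather than a standard length-$N$ DFT. A further complication is that $m,n$ may be half-integers; in that case $e^{in\phi}$ is not $2\pi$-periodic, but it can be handled as a shifted or twisted DFT. I would address both issues by stating explicitly that the phases and the endpoint duplication are absorbed into the approximation symbol. Alternatively, one can fold the duplicated endpoint into $j=0$ and treat the bracket as a length-$(N-1)$ DFT in each variable. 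In either reading the claimed structure, an outer $\theta$-sum of $P^l_{nm}(\cos\theta_k)\sin\theta_k$ weighted against a two-dimensional DFT of each slice, is preserved.
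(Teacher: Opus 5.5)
Your proposal is correct and follows the same route as the paper. You replace the Euler-angle integral by a Riemann sum on the grid, use $e^{i(n\phi+m\psi)}=e^{in\phi}e^{im\psi}$ to pull $P^l_{nm}(\cos\theta_k)\sin\theta_k$ out of the $(j_1,j_2)$-sums, and recognize the remaining double sum, for each fixed $k$, as a two-dimensional DFT of the slice $f(\cdot,\theta_k,\cdot)$. Your additional remarks are genuine points the paper passes over, and flagging them makes your version more careful than the original: the mesh-width factor $4\pi^3/(N-1)^3$ that the stated formula silently drops, the $N$-term sums running over $(N-1)$-st roots of unity, and the twist needed when $m,n$ are half-integers.
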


To derive this expression, we substitute the continuous integrals with finite Riemann sums over the uniform grid. Crucially, the kernel function separates the angular dependencies, allowing us to group the terms involving $\phi$ and $\psi$
\begin{equation*}
    \underbrace{\sum_{j_1=0}^{N-1} \sum_{j_2=0}^{N-1} f(\phi_{j_1}, \theta_k, \psi_{j_2}) \, e^{in\phi_{j_1}} e^{im\psi_{j_2}}}_{\text{2D Discrete Fourier Transform}}.
\end{equation*}
We observe that for each fixed $\theta_k$, this inner summation is exactly a 2D DFT with respect to the indices $(j_1, j_2)$. 

Therefore, instead of performing a naive summation, we can compute this term efficiently using a 2D FFT routine. Letting $f_2(\theta_k)$ denote the output of this FFT step for the $k$-th, the final calculation of $\hat{f}(l)_{mn}$ reduces to the one-dimensional contraction over $\theta_k$:
\[
    \hat{f}(l)_{mn} \approx \frac{1}{8\pi^2} \sum_{k=0}^{N-1} f_2(\theta_k) \sin(\theta_k) P^l_{nm}(\cos\theta_k).
\]

We can view $f_2(\theta_k) \sin(\theta_k)$ as a sequence of values indexed by $k = 0, 1, \ldots, N - 1$.

The computation is completed by performing the required discrete Legendre transforms, defined as a set of sums
\begin{equation}\label{Core_eq}
    \hat{f}(l)_{mn}\approx\sum_{k=0}^{N-1}[s]_k P^l_{nm}(\cos \theta_k) = \langle \mathbf{s}, \mathbf{P}^l_{nm} \rangle
\end{equation}
where $[s]_k = f_2(\theta_k) \sin(\theta_k)$ is an arbitrary input vector $\mathbf{s}$ with its $k^{\text{th}}$ component equal to $[s]_k$ (for notational simplicity, we suppress the factor $1/8\pi^2$).

In this context, we use the notation for the discrete inner product and define $\mathbf{P}^l_{nm}$ as the vector formed by evaluating the Legendre polynomial $P^l_{nm}(\cos \theta_k)$ at the corresponding points $\cos \theta_k$
\begin{align*}
    \mathbf{P}_{nm}^l=\begin{pmatrix}
        P_{nm}^l(\cos\theta_0)&\\
        \vdots\\
        P_{nm}^l(\cos\theta_{N-1})
    \end{pmatrix}.
\end{align*}

The core of our contribution is the efficient computation of \eqref{Core_eq}. For clarity, we restrict our derivation to the case $m=n=0$, where $P^l_{00} = P_l$ are the Legendre polynomials. The generalization to Jacobi polynomials follows an analogous recursive structure. We exploit the three-term recurrence relation of Legendre polynomials to derive a divide-and-conquer strategy.

\begin{lem}
For any  $l = 0,1,\ldots,N-1$. We can express $P_{l+1}(x)$ in terms of $P_{l-1}(x)$ and $P_l(x)$:
\begin{align}\label{eq:P_l}
    P_{l+1}(\cos\theta) = \frac{2l+1}{l+1} \cos\theta P_l(\cos\theta) - \frac{l}{l+1} P_{l-1}(\cos\theta).
\end{align}
With initial conditions
$P_0(\cos\theta) = 1$ and $P_{-1}(\cos\theta) = 0$,
\end{lem}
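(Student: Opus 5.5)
The statement is Bonnet's three-term recurrence with the substitution $x=\cos\theta$. So it suffices to prove, for all $x\in[-1,1]$ and $l\ge 0$,
\[
(l+1)P_{l+1}(x)=(2l+1)xP_l(x)-lP_{l-1}(x),
\]
and then divide by $l+1$ and set $x=\cos\theta$. I will work with the generating function
\[
G(x,t)=(1-2xt+t^2)^{-1/2}=\sum_{l\ge0}P_l(x)t^l .
\]
This expansion has to be tied to the paper's normalization. With $m=n=0$, formula \eqref{Expresation P_mn^l} gives the Rodrigues form
\[
P^l_{00}(x)=c^l_{00}\left(\tfrac{d}{dx}\right)^l\bigl[(1-x^2)^l\bigr],
\qquad
c^l_{00}=\frac{(-1)^l}{2^l\,l!}.
\]
This is exactly the classical Rodrigues formula $P_l(x)=\frac{1}{2^l l!}\frac{d^l}{dx^l}(x^2-1)^l$, because $(1-x^2)^l=(-1)^l(x^2-1)^l$. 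Hence $P^l_{00}=P_l$, with $P_0=1$.

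To identify the Rodrigues polynomials with the generating-function coefficients, I would use Cauchy's integral formula. Write
\[
P_l(x)=\frac{1}{2\pi i}\oint\frac{(z^2-1)^l}{2^l(z-x)^{l+1}}\,dz
\]
(Schläfli's integral) and sum the geometric series in $t$. The substitution $t=2(z-x)/(z^2-1)$ then produces $(1-2xt+t^2)^{-1/2}$. This identification is the step I expect to be the main obstacle, since the recurrence itself is purely formal. A possible shortcut is to verify the recurrence directly from Rodrigues' formula via the Leibniz rule. I prefer the generating-function route because it is cleaner.

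Granting the expansion, differentiate $G$ in $t$:
\[
(1-2xt+t^2)\,\partial_tG=(x-t)\,G .
\]
Substitute the series and compare the coefficients of $t^l$. The left side gives
\[
(l+1)P_{l+1}-2xlP_l+(l-1)P_{l-1},
\]
and the right side gives $xP_l-P_{l-1}$. Equating them yields $(l+1)P_{l+1}=(2l+1)xP_l-lP_{l-1}$.

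For $l=0$ the $P_{-1}$ terms are absent from the coefficient comparison. The convention $P_{-1}=0$ is therefore consistent, and the formula reduces to $P_1=x$, which agrees with Rodrigues. The coefficient comparison holds as polynomial identities in $x$, so it holds in particular for all $0\le l\le N-1$. Setting $x=\cos\theta$ gives \eqref{eq:P_l}.
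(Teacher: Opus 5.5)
Your proof is correct, and it takes a more complete route than the paper's. The paper does not derive the recurrence. It asserts that $P^l_{00}$ is the standard Legendre polynomial, quotes Bonnet's relation $(l+1)P_{l+1}-(2l+1)xP_l+lP_{l-1}=0$ as a classical result, isolates $P_{l+1}$, sets $x=\cos\theta$, and checks $P_1=x$ at $l=0$ with $P_{-1}=0$ imposed by convention.

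You share those final steps, but you add two things. First, you verify the identification by computing $c^l_{00}=(-1)^l/(2^l\,l!)$ and reducing the paper's formula for $P^l_{nm}$ to the classical Rodrigues formula. Second, you prove Bonnet's relation from the generating function via $(1-2xt+t^2)\partial_tG=(x-t)G$ and coefficient comparison. Both computations check out, including the boundary case $l=0$.

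The normalization check is not cosmetic. The coefficients $\frac{2l+1}{l+1}$ and $\frac{l}{l+1}$ hold only for the normalization $P_l(1)=1$; any $l$-dependent rescaling, such as the orthonormal one, changes them. The paper asserts rather than checks that its $P^l_{00}$ has this normalization. Your Rodrigues computation also grounds $P_0=1$ properly, which the paper's proof wrongly attributes to the recurrence itself.

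The price of your route is the Schl\"afli/residue step identifying the Rodrigues polynomials with the Taylor coefficients of $(1-2xt+t^2)^{-1/2}$, which you only sketch. Stated precisely: for small $|t|$, the summed integrand $1/\bigl(z-x-\tfrac{t}{2}(z^2-1)\bigr)$ has a single pole inside a small circle around $x$, namely the root tending to $x$. Its residue is $1/(1-tz)$ evaluated there, which equals $(1-2xt+t^2)^{-1/2}$. If you prefer to avoid complex analysis, the Leibniz-rule verification directly from Rodrigues, which you mention, is a purely algebraic alternative.
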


\begin{proof}
    First, recall from Equation \eqref{Expresation P_mn^l} that for the case $m=n=0$, the generalized function $P^l_{00}(x)$ reduces to the standard Legendre polynomial, denoted here as $P_l(x)$. According to the recurrence relation for the \textit{Legendre polynomials} $P_n(x)$ is given by
\begin{align}\label{eq:rodrigues}
(n+1) P_{n+1}(x) - (2n+1)x P_n(x) + n P_{n-1}(x) = 0.
\end{align}
    
    It is a classical result that polynomials defined by \eqref{eq:rodrigues} satisfy the recurrence relation \eqref{eq:P_l}
    \begin{equation} 
        (l+1) P_{l+1}(x) - (2l+1)x P_l(x) + l P_{l-1}(x) = 0.
    \end{equation}
    
    By setting $x = \cos\theta$ and isolating the term $P_{l+1}(x)$, we obtain the expression stated in the lemma
    \[
        P_{l+1}(\cos\theta) = \frac{2l+1}{l+1} \cos\theta P_l(\cos\theta) - \frac{l}{l+1} P_{l-1}(\cos\theta).
    \]
    
    Finally, we verify the initial conditions. For $l=0$, the formula yields
    \[
        P_1(\cos\theta) = \frac{1}{1} \cos\theta P_0(\cos\theta) - 0.
    \]
    Using \eqref{eq:rodrigues} for $l=0$, we have $P_0(x) = 1$, which implies $P_1(x) = x$. This is consistent with the standard definition. By convention, we set $P_{-1}(\cos\theta) = 0$ to maintain the validity of the recurrence at the boundary $l=0$.
\end{proof}

The previous lemma establishes the recurrence relation for sequential steps. However, in the context of a recursive algorithm, it is often necessary to evaluate polynomials at a depth $L+r$ relative to an arbitrary base level $L$, rather than iterating from the origin $l=0$. This motivates the following result, which generalizes the recurrence to a shift of $r$ steps.

\begin{prop}
For any level $L$ and step $r \geq 1$, the polynomial $P_{L+r}$ can be expressed as a linear combination of $P_L$ and $P_{L-1}$
\begin{equation}\label{CAP6-8}
P_{L+r}(\cos\theta) = A_r^L(\cos\theta) P_L(\cos\theta) + B_r^L(\cos\theta) P_{L-1}(\cos\theta),
\end{equation}
With initial conditions $A_0^L=1, B_0^L=0, A_{-1}^L=0, B_{-1}^L=1$.
\end{prop}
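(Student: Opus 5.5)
We argue by induction on $r$, using the three-term recurrence \eqref{eq:P_l} of the preceding Lemma, applied at level $L+r-1$ instead of level $l$. Throughout, write $x=\cos\theta$. The claim is that there are functions $A_r^L(x)$ and $B_r^L(x)$, in fact polynomials in $x$ of degree at most $r$ and $r-1$ respectively, such that \eqref{CAP6-8} holds. We will obtain them from an explicit recurrence. The prescribed initial values make \eqref{CAP6-8} trivially true for $r=0$, where it reads $P_L=P_L$, and for $r=-1$, where it reads $P_{L-1}=P_{L-1}$. These two cases serve as the base of a two-step induction.

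\textbf{Inductive step.} Assume \eqref{CAP6-8} holds for $r-1$ and $r-2$ with $r\ge 1$. The Lemma with $l=L+r-1$ gives
\[
P_{L+r}(x)=\frac{2(L+r)-1}{L+r}\,x\,P_{L+r-1}(x)-\frac{L+r-1}{L+r}\,P_{L+r-2}(x).
\]
Substitute the inductive expressions for $P_{L+r-1}$ and $P_{L+r-2}$ in terms of $P_L$ and $P_{L-1}$, and collect the coefficients. This forces the definitions
\begin{align*}
A_r^L(x)&=\frac{2L+2r-1}{L+r}\,x\,A_{r-1}^L(x)-\frac{L+r-1}{L+r}\,A_{r-2}^L(x),\\
B_r^L(x)&=\frac{2L+2r-1}{L+r}\,x\,B_{r-1}^L(x)-\frac{L+r-1}{L+r}\,B_{r-2}^L(x).
\end{align*}
With these definitions, \eqref{CAP6-8} holds for $r$. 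Since the recurrence is linear with constant-in-$\theta$ coefficients, both $A^L$ and $B^L$ satisfy the same recurrence as the $P_l$, shifted by $L$. A routine induction on this recurrence gives the degree bounds.

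\textbf{Sanity check at $r=1$.} The formulas give $A_1^L=\frac{2L+1}{L+1}x$ and $B_1^L=-\frac{L}{L+1}$, which is exactly the Lemma with $l=L$. This confirms that the initial data $A_0^L=1$, $B_0^L=0$, $A_{-1}^L=0$, $B_{-1}^L=1$ are the right ones.

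\textbf{Main obstacle.} The only delicate point is the range of validity. The Lemma is stated for $0\le l\le N-1$ with the convention $P_{-1}=0$, so we must check that every index $l=L+r-1$ used in the induction stays in that range. We must also handle the boundary case $L=0$, where $P_{L-1}=0$ and the $B$-term simply drops out. Provided $0\le L$ and $L+r\le N$, all applications of the Lemma are legitimate and the denominators $L+r$ are nonzero. We will state this restriction explicitly in the proof.
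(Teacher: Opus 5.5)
Your proof is correct and takes the same route as the paper: you iterate the three-term Legendre recurrence forward from level $L$, with the cases $r=0$ and $r=-1$ supplying the initial data. Your two-step induction is more explicit than the paper's sketch, and the scalar recurrences you obtain for $A_r^L$ and $B_r^L$ are the ones the paper proves separately in the following proposition, with the index shifted by one.
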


\begin{proof}
Fixing a level $L$ and iterating the three-term recurrence relation \eqref{eq:P_l} forward $r$ steps, we obtain trigonometric polynomials $A_r^L(\cos\theta)$ and $B_r^L(\cos\theta)$ such that, for all $r \geq 1$,
\begin{equation}\label{eq-eq}
P_{L+r}(\cos\theta)
= A_r^L(\cos\theta)\, P_L(\cos\theta)
+ B_r^L(\cos\theta)\, P_{L-1}(\cos\theta).
\end{equation}

This representation follows directly from the recurrence relation
\[
(L+r) P_{L+r}(\cos\theta)
- (2(L+r)-1)\cos\theta\, P_{L+r-1}(\cos\theta)
+ (L+r-1) P_{L+r-2}(\cos\theta) = 0,
\]
which expresses each polynomial $P_{L+r}$ in terms of its two immediate predecessors.\\

To determine the initial conditions for the coefficients $A_r^L$ and $B_r^L$, we consider the lowest values of $r$. Although the recurrence is defined for $r \geq 1$, it is convenient to extend it formally to $r = -1$ and $r=0$ in order to complete the set of initial conditions. \\

For $r = 0$, equation \eqref{eq-eq} reduces to
\[
P_L(\cos\theta)
= A_0^L\, P_L(\cos\theta)
+ B_0^L\, P_{L-1}(\cos\theta),
\]
which must hold for any $L$. This immediately implies
\[
A_0^L = 1, \qquad B_0^L = 0.
\]

For $r = -1$, we obtain
\[
P_{L-1}(\cos\theta)
= A_{-1}^L\, P_L(\cos\theta)
+ B_{-1}^L\, P_{L-1}(\cos\theta),
\]
which is satisfied for all $L$ only if
\[
A_{-1}^L = 0, \qquad B_{-1}^L = 1.
\]

Therefore, the coefficients $A_r^L$ and $B_r^L$ are initialized by
\[
A_0^L = 1, \quad B_0^L = 0, \quad
A_{-1}^L = 0, \quad B_{-1}^L = 1,
\]
ensuring consistency of the representation \eqref{eq-eq} at the base levels.
    
\end{proof}
The next step is to derive a recurrence formula for the (shifted) Legendre polynomials $A_r^L$ and $B_r^L$.

\begin{prop}
Let $L$ be fixed. Assume that for each $r \geq 1$ the Legendre polynomials satisfy
\[
P_{L+r}(\cos\theta)
= A_r^L(\cos\theta)\, P_L(\cos\theta)
+ B_r^L(\cos\theta)\, P_{L-1}(\cos\theta),
\]
with initial conditions
\[
A_0^L = 1, \quad B_0^L = 0, \qquad
A_{-1}^L = 0, \quad B_{-1}^L = 1.
\]
Then the coefficients $A_r^L$ and $B_r^L$ satisfy the matrix recurrence
\[
\begin{pmatrix}
A_{r+1}^L & B_{r+1}^L \\
A_r^L & B_r^L
\end{pmatrix}
=
\begin{pmatrix}
\dfrac{2L+2r+1}{L+r+1}\cos\theta &
-\dfrac{L+r}{L+r+1} \\
1 & 0
\end{pmatrix}
\begin{pmatrix}
A_r^L & B_r^L \\
A_{r-1}^L & B_{r-1}^L
\end{pmatrix},
\]
or, equivalently, the scalar recurrences
\[
\begin{cases}
A_{r+1}^L
= \dfrac{2L+2r+1}{L+r+1}\cos\theta\, A_r^L
- \dfrac{L+r}{L+r+1} A_{r-1}^L, \\[0.8em]
B_{r+1}^L
= \dfrac{2L+2r+1}{L+r+1}\cos\theta\, B_r^L
- \dfrac{L+r}{L+r+1} B_{r-1}^L.
\end{cases}
\]
\end{prop}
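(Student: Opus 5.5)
The plan is to apply the three-term recurrence of the earlier Lemma at level $l=L+r$ and then use the representation \eqref{CAP6-8} for the two preceding indices. Once everything is written in terms of $P_L$ and $P_{L-1}$, the recurrence for $A^L$ and $B^L$ follows by reading off coefficients.

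Concretely, for $r\geq 0$ I would apply \eqref{eq:P_l} with $l=L+r$:
\[
P_{L+r+1}(\cos\theta)=\frac{2L+2r+1}{L+r+1}\cos\theta\,P_{L+r}(\cos\theta)-\frac{L+r}{L+r+1}P_{L+r-1}(\cos\theta).
\]
I then substitute $P_{L+r}=A_r^LP_L+B_r^LP_{L-1}$ and $P_{L+r-1}=A_{r-1}^LP_L+B_{r-1}^LP_{L-1}$. This representation holds for every index $\ge -1$: for $r\ge1$ it is the hypothesis, and for $r=0,-1$ it holds trivially with the stated initial conditions $A_0^L=1$, $B_0^L=0$, $A_{-1}^L=0$, $B_{-1}^L=1$. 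Collecting terms gives
\begin{align*}
P_{L+r+1}(\cos\theta)
&=\Big(\tfrac{2L+2r+1}{L+r+1}\cos\theta\,A_r^L-\tfrac{L+r}{L+r+1}A_{r-1}^L\Big)P_L(\cos\theta)\\
&\quad+\Big(\tfrac{2L+2r+1}{L+r+1}\cos\theta\,B_r^L-\tfrac{L+r}{L+r+1}B_{r-1}^L\Big)P_{L-1}(\cos\theta).
\end{align*}

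The main point to handle is that the coefficients $A_{r+1}^L,B_{r+1}^L$ are only well defined if the representation is unique. A priori one cannot simply "compare coefficients", since $P_L$ and $P_{L-1}$ are functions and not formal symbols. I see two ways to deal with this. The first is to \emph{define} $A_r^L,B_r^L$ as the polynomials in $\cos\theta$ produced by iterating the recurrence, as in the proof of the previous Proposition. Then the displayed identity is exactly the recursive definition, and the scalar recurrences hold by construction. The second is to argue uniqueness within polynomial coefficients of degree at most $r$, using $\deg P_L=L$, $\deg P_{L-1}=L-1$ and the fact that $P_L$ and $P_{L-1}$ have no common zero (a standard interlacing property). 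I would adopt the first, constructive reading, since that is how $A_r^L$ and $B_r^L$ were introduced. Under that reading the scalar recurrences hold for all $r\ge 0$, and in particular for all $r\ge1$.

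Finally, I would stack the scalar recurrence for $A_{r+1}^L,B_{r+1}^L$ as the first row and the trivial identity $(A_r^L,B_r^L)=1\cdot(A_r^L,B_r^L)+0\cdot(A_{r-1}^L,B_{r-1}^L)$ as the second row. This is exactly the stated $2\times2$ matrix product, which shows the matrix and scalar formulations are equivalent. As a check, I would also verify the base case $r=0$: it gives $A_1^L=\frac{2L+1}{L+1}\cos\theta$ and $B_1^L=-\frac{L}{L+1}$, which agrees with \eqref{eq:P_l} at $l=L$.
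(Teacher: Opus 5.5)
Your computation is the same as the paper's. You apply the three-term recurrence at level $L+r$ and substitute $P_{L+r}=A_r^LP_L+B_r^LP_{L-1}$ and $P_{L+r-1}=A_{r-1}^LP_L+B_{r-1}^LP_{L-1}$, using the initial conditions for $r=0,-1$. You then collect terms and stack the trivial second row.

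The difference is the final step. The paper says ``since the vectors $(P_L,P_{L-1})^{\mathsf T}$ are the same in both expressions'' and equates coefficient matrices. That is exactly the inference you rightly refuse to make.

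Your constructive reading is necessary, not just convenient: if $A_r^L,B_r^L$ are merely assumed to satisfy the hypothesis, the proposition is false. Take $L=1$ and the pair $\tilde A_1^1=\tfrac32\cos\theta+1$, $\tilde B_1^1=-\tfrac12-\cos\theta$. It also satisfies $P_2=\tilde A_1^1P_1+\tilde B_1^1P_0$. But it violates the recurrence at $r=0$, which forces $A_1^1=\tfrac32\cos\theta$. If the remaining coefficients are left as constructed, it violates the recurrence at $r=1$ as well. Defining $A_r^L,B_r^L$ through the transfer-matrix recursion, as the preceding proposition implicitly does, makes the recurrences hold by construction. That is what makes both the statement and the paper's argument sound.

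The same example shows that your alternative uniqueness route fails as stated: $\tilde A_1^1$ and $\tilde B_1^1$ both have degree at most $r=1$. Coprimality of $P_L$ and $P_{L-1}$ only yields $P_L\mid b$ whenever $aP_L+bP_{L-1}=0$. So uniqueness holds only while the admissible degree of $b$ stays below $L$. Once $r\ge L$ (or $r\ge L+1$ under the sharper bound $\deg b\le r-1$), you can add $(cP_{L-1},-cP_L)$ to $(A_r^L,B_r^L)$ without breaking the hypothesis. The paper uses shifts $r$ far beyond $L$ (e.g.\ $L=0,1$), so you were right not to rely on that route.
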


\begin{proof}
We start from the three-term recurrence relation for Legendre polynomials,
\[
(L+r+1) P_{L+r+1}(\cos\theta)
= (2L+2r+1)\cos\theta\, P_{L+r}(\cos\theta)
- (L+r) P_{L+r-1}(\cos\theta).
\]
Dividing by $L+r+1$ and writing this relation in matrix form yields
\begin{equation}\label{MATRIZ_LEGEN1}
\begin{pmatrix}
P_{L+r+1} \\
P_{L+r}
\end{pmatrix}
=
\begin{pmatrix}
\dfrac{2L+2r+1}{L+r+1}\cos\theta &
-\dfrac{L+r}{L+r+1} \\
1 & 0
\end{pmatrix}
\begin{pmatrix}
P_{L+r} \\
P_{L+r-1}
\end{pmatrix}.
\end{equation}

By hypothesis, each polynomial $P_{L+k}$ can be expressed as
\[
P_{L+k}
= A_k^L P_L + B_k^L P_{L-1}.
\]
Substituting the expressions for $P_{L+r}$ and $P_{L+r-1}$ into
\eqref{MATRIZ_LEGEN1}, we obtain
\begin{equation}\label{EQmatrix1}
\begin{pmatrix}
P_{L+r+1} \\
P_{L+r}
\end{pmatrix}
=
\begin{pmatrix}
\dfrac{2L+2r+1}{L+r+1}\cos\theta &
-\dfrac{L+r}{L+r+1} \\
1 & 0
\end{pmatrix}
\begin{pmatrix}
A_r^L & B_r^L \\
A_{r-1}^L & B_{r-1}^L
\end{pmatrix}
\begin{pmatrix}
P_L \\
P_{L-1}
\end{pmatrix}.
\end{equation}

On the other hand, applying the representation of $P_{L+r+1}$ directly gives
\[
\begin{pmatrix}
P_{L+r+1} \\
P_{L+r}
\end{pmatrix}
=
\begin{pmatrix}
A_{r+1}^L & B_{r+1}^L \\
A_r^L & B_r^L
\end{pmatrix}
\begin{pmatrix}
P_L \\
P_{L-1}
\end{pmatrix}.
\]

Since the vectors $(P_L, P_{L-1})^{\mathsf T}$ are the same in both expressions,
equating the corresponding coefficient matrices in \eqref{EQmatrix1} yields
\[
\begin{pmatrix}
A_{r+1}^L & B_{r+1}^L \\
A_r^L & B_r^L
\end{pmatrix}
=
\begin{pmatrix}
\dfrac{2L+2r+1}{L+r+1}\cos\theta &
-\dfrac{L+r}{L+r+1} \\
1 & 0
\end{pmatrix}
\begin{pmatrix}
A_r^L & B_r^L \\
A_{r-1}^L & B_{r-1}^L
\end{pmatrix}.
\]
Comparing the entries of both sides gives the stated recurrence relations for
$A_{r+1}^L$ and $B_{r+1}^L$.
\end{proof}

Using a similar argument, we now aim to derive a recurrence formula for the Legendre polynomials shifted $s$ steps forward starting from $r$ steps. The core idea is that progressing $r + s$ steps from level $L$ is equivalent to
\begin{enumerate}
\item Advancing $r$ steps from $L$ (reaching $L + r$).
\item Then advancing $s$ additional steps from $L + r$ (reaching $L + r + s$).
\end{enumerate}

This implies that the matrix representing a shift of $r + s$ steps can be expressed as the composition (i.e., matrix product) of the matrices representing a shift of $r$ steps and $s$ steps, respectively.

\begin{prop}\label{Prop-P_l}
Let $L$ be fixed and let $A_r^L(\cos\theta)$ and $B_r^L(\cos\theta)$ be defined by
\[
P_{L+r}(\cos\theta)
= A_r^L(\cos\theta)\, P_L(\cos\theta)
+ B_r^L(\cos\theta)\, P_{L-1}(\cos\theta),
\qquad r \ge 1.
\]
Then, for any integers $r,s \ge 0$, the coefficients satisfy the composition rule
\begin{equation}\label{eq:composition}
\begin{pmatrix}
A_{r+s}^L & B_{r+s}^L \\
A_{r+s-1}^L & B_{r+s-1}^L
\end{pmatrix}
=
\begin{pmatrix}
A_s^{L+r} & B_s^{L+r} \\
A_{s-1}^{L+r} & B_{s-1}^{L+r}
\end{pmatrix}
\begin{pmatrix}
A_r^L & B_r^L \\
A_{r-1}^L & B_{r-1}^L
\end{pmatrix}.
\end{equation}
Moreover, the initial conditions are
\[
A_r^0 = P_r, \qquad B_r^0 = 0, \qquad r \ge 1.
\]
\end{prop}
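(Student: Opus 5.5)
The plan is to treat the coefficients $A_r^L,B_r^L$ as \emph{defined} by the matrix recurrence of the preceding proposition, with the initial values $A_0^L=1$, $B_0^L=0$, $A_{-1}^L=0$, $B_{-1}^L=1$. I would not try to read them off from the identity $P_{L+r}=A_r^LP_L+B_r^LP_{L-1}$. That identity alone does not determine the coefficients uniquely as functions of $\theta$, and at $L=0$, where $P_{-1}=0$, it does not determine $B_r^0$ at all. With this convention the whole statement becomes a statement about products of $2\times 2$ transfer matrices.

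Concretely, I would set, for $k\ge 0$,
\[
T_k(\cos\theta)=\begin{pmatrix}\dfrac{2k+1}{k+1}\cos\theta & -\dfrac{k}{k+1}\\ 1 & 0\end{pmatrix},
\qquad
M_r^L=\begin{pmatrix}A_r^L & B_r^L\\ A_{r-1}^L & B_{r-1}^L\end{pmatrix}.
\]
The preceding proposition reads $M_{r+1}^L=T_{L+r}M_r^L$, and the initial conditions say exactly that $M_0^L=I$. A trivial induction on $r$ then gives the closed form
\[
M_r^L=T_{L+r-1}T_{L+r-2}\cdots T_{L+1}T_L \quad (r\ge 1),
\]
with the empty product for $r=0$. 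The composition rule \eqref{eq:composition} follows by splitting the ordered product: the factors $T_{L+r+s-1}\cdots T_{L+r}$ form $M_s^{L+r}$, and the remaining factors $T_{L+r-1}\cdots T_L$ form $M_r^L$. The edge cases $r=0$ or $s=0$ hold because $M_0=I$.

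Alternatively, I would avoid the closed form and induct on $s$ with $r$ fixed. The base case $s=0$ is $M_r^L=I\cdot M_r^L$. For the step, I would use
\[
M_{r+s+1}^L=T_{L+r+s}M_{r+s}^L=T_{L+r+s}M_s^{L+r}M_r^L=M_{s+1}^{L+r}M_r^L,
\]
where the last equality is the defining recurrence at base level $L+r$ and step $s$. Either route is routine. The point to check is that the index in the transfer matrix depends only on the absolute level $L+r+s$, so shifting the base level from $L$ to $L+r$ is consistent.

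For the initial conditions at $L=0$, I would look at the first column and the second column separately. The first column $(A_r^0)_r$ obeys the Legendre three-term recurrence \eqref{eq:P_l} with $A_0^0=1=P_0$ and $A_{-1}^0=0=P_{-1}$, so $A_r^0=P_r$ by induction. For the second column, $T_0$ has upper-right entry $-0/1=0$, so $B_1^0=\cos\theta\,B_0^0-0\cdot B_{-1}^0=0$. With $B_0^0=B_1^0=0$, the homogeneous recurrence then forces $B_r^0=0$ for all $r\ge 1$. I expect the only real obstacle to be the well-definedness issue described in the first paragraph. Once the recurrence-based definition is fixed, the remaining steps are routine inductions.
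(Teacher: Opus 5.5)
Your proposal is correct, but it takes a different route from the paper's proof, and yours is the sounder one. The paper argues directly from the defining identity. It regards the coefficient matrix as \emph{the} map sending $(P_L,P_{L-1})^{\mathsf T}$ to $(P_{L+r},P_{L+r-1})^{\mathsf T}$. It then observes that both $M_{r+s}^L$ and $M_s^{L+r}M_r^L$ send $(P_L,P_{L-1})^{\mathsf T}$ to $(P_{L+r+s},P_{L+r+s-1})^{\mathsf T}$, and concludes that the two matrices coincide. In the same way it reads off $A_r^0=P_r$ and $B_r^0=0$ from $P_r=A_r^0P_0+B_r^0P_{-1}$.

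As you suspected, neither inference is justified. A $2\times 2$ matrix is not determined by its value on a single vector: replacing $(A_r^L,B_r^L)$ by $(A_r^L+QP_{L-1},\,B_r^L-QP_L)$, for any function $Q$ of $\theta$, leaves the identity intact. And since $P_{-1}=0$, the identity at $L=0$ says nothing about $B_r^0$.

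You fix the coefficients through $M_{r+1}^L=T_{L+r}M_r^L$ with $M_0^L=I$. This turns the composition rule into a trivial splitting of the ordered product $T_{L+r+s-1}\cdots T_{L+r}\,T_{L+r-1}\cdots T_L$. Your observation that the upper-right entry of $T_0$ vanishes is the actual reason that $B_r^0=0$.

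One point the paper's formulation keeps in view, and which you should make explicit, is the link back to the statement, where the coefficients are introduced through $P_{L+r}=A_r^LP_L+B_r^LP_{L-1}$. Add the one-line induction showing $M_r^L(P_L,P_{L-1})^{\mathsf T}=(P_{L+r},P_{L+r-1})^{\mathsf T}$. It follows from $(P_{k+1},P_k)^{\mathsf T}=T_k(P_k,P_{k-1})^{\mathsf T}$, which holds for all $k\ge 0$ with $P_{-1}=0$. This shows that your recurrence-defined coefficients are a legitimate choice of the $A_r^L,B_r^L$ in the proposition.
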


\begin{proof}
The key observation is that advancing $r+s$ steps from level $L$ can be decomposed
into two successive shifts: first $r$ steps from $L$ to $L+r$, and then $s$ steps
from $L+r$ to $L+r+s$.\\

By definition, the matrix
\[
\begin{pmatrix}
A_r^L & B_r^L \\
A_{r-1}^L & B_{r-1}^L
\end{pmatrix}
\]
represents the transformation that maps $(P_L, P_{L-1})^{\mathsf T}$ to
$(P_{L+r}, P_{L+r-1})^{\mathsf T}$.
Similarly, advancing $s$ steps from level $L+r$ is represented by
\[
\begin{pmatrix}
A_s^{L+r} & B_s^{L+r} \\
A_{s-1}^{L+r} & B_{s-1}^{L+r}
\end{pmatrix}.
\]

Applying first the shift of $r$ steps and then the shift of $s$ steps yields
\begin{equation}\label{eq:rs-product}
\begin{pmatrix}
A_s^{L+r} & B_s^{L+r} \\
A_{s-1}^{L+r} & B_{s-1}^{L+r}
\end{pmatrix}
\begin{pmatrix}
A_r^L & B_r^L \\
A_{r-1}^L & B_{r-1}^L
\end{pmatrix}
=
\begin{pmatrix}
A_s^{L+r}A_r^L + B_s^{L+r}A_{r-1}^L &
A_s^{L+r}B_r^L + B_s^{L+r}B_{r-1}^L \\
A_{s-1}^{L+r}A_r^L + B_{s-1}^{L+r}A_{r-1}^L &
A_{s-1}^{L+r}B_r^L + B_{s-1}^{L+r}B_{r-1}^L
\end{pmatrix}.
\end{equation}

On the other hand, advancing $r+s$ steps directly from level $L$ is represented by
\[
\begin{pmatrix}
A_{r+s}^L & B_{r+s}^L \\
A_{r+s-1}^L & B_{r+s-1}^L
\end{pmatrix}.
\]
Since both procedures map $(P_L, P_{L-1})^{\mathsf T}$ to
$(P_{L+r+s}, P_{L+r+s-1})^{\mathsf T}$, the corresponding matrices must coincide.
Equating the matrices in \eqref{eq:rs-product} with the above one yields
the identity \eqref{eq:composition}, or equivalently the scalar relations
\[
\begin{cases}
A_{r+s}^L = A_s^{L+r}A_r^L + B_s^{L+r}A_{r-1}^L, \\[0.3em]
B_{r+s}^L = A_s^{L+r}B_r^L + B_s^{L+r}B_{r-1}^L, \\[0.3em]
A_{r+s-1}^L = A_{s-1}^{L+r}A_r^L + B_{s-1}^{L+r}A_{r-1}^L, \\[0.3em]
B_{r+s-1}^L = A_{s-1}^{L+r}B_r^L + B_{s-1}^{L+r}B_{r-1}^L.
\end{cases}
\]

The consistency with the initial conditions is immediate. If $r=0$ or $s=0$, the
corresponding matrix reduces to the identity, so that \eqref{eq:composition}
holds trivially.

Finally, for $L=0$ the defining relation becomes
\[
P_r(\cos\theta)
= A_r^0(\cos\theta)\, P_0(\cos\theta)
+ B_r^0(\cos\theta)\, P_{-1}(\cos\theta).
\]
By convention, Legendre polynomials are undefined for negative indices, so we assume $P_{-1}(\cos\theta) = 0$, and using $P_0=1$ we obtain
\[
P_r(\cos\theta) = A_r^0(\cos\theta),
\]
which implies
\[
A_r^0 = P_r, \qquad B_r^0 = 0,
\]
for all $r \ge 1$. This completes the proof.
\end{proof}

The results developed in the previous \eqref{Prop-P_l} provide a recursive and algebraic
description of how Legendre polynomials of higher degree can be generated from
lower-degree ones through the shifted coefficients $A_r^L$ and $B_r^L$.
Beyond their theoretical interest, these coefficients play a central role in the
efficient evaluation of projections onto high-degree Legendre polynomials.

The purpose of introducing the shifted polynomials $A_r^L$ and $B_r^L$ is to allow us to express projections onto higher-degree Legendre polynomials as sums of projections onto (shifted) Legendre polynomials of lower degree. Suppose that, during the computation of the inner products $\langle \mathbf{s}, \mathbf{P}_j \rangle$ (for $j \leq L$), we had already stored the components of the vectors $\mathbf{s}^j = \mathbf{s} \cdot \mathbf{P}_j$, defined as the pointwise product between the vectors $\mathbf{s}$ and $\mathbf{P}_j$. That is, for each point $k$, we multiply the value $s_k$ (the function $s$ evaluated at $\theta_k$) by the value $P_j(\cos\theta_k)$ (the Legendre polynomial of degree $j$ evaluated at $\cos\theta_k$)
\[
\mathbf{s}^j = \mathbf{s} \cdot \mathbf{P}_j, \quad \text{with} \quad (\mathbf{s} \cdot \mathbf{P}_j)_k = s_k \cdot P_j(\cos\theta_k).\]
In other words, $\mathbf{s}^j$ is a vector whose $k$-th component is the product of the corresponding values of $s$ and $P_j$ evaluated at the nodes $\theta_k$. The key observation is that these precomputed quantities can be reused to obtain
projections onto higher-degree polynomials without explicitly evaluating
$P_{L+r}$.

The following theorem formalizes this idea.
It shows that the projection onto $P_{L+r}$ can be written as a sum of inner
products involving only the stored vectors $\mathbf{s}^L$ and
$\mathbf{s}^{L-1}$ and recursively defined shifted polynomials of lower degree.
Moreover, when the total shift $r$ is decomposed into smaller increments, the
resulting expression exhibits a natural divide-and-conquer structure, directly
analogous to the classical Cooley--Tukey decomposition in the FFT.

\begin{thm}[Divide-and-conquer]\label{thm:final_legendre_fft}
Let $L \ge 1$ be fixed and let $\mathbf{s}$ be a discrete function sampled at nodes
$\{\theta_k\}_{k=1}^N$.  
Assume that for all $j \le L$ the vectors
\[
\mathbf{s}^j := \mathbf{s} \cdot \mathbf{P}_j,
\qquad (\mathbf{s}^j)_k = s_k\, P_j(\cos\theta_k),
\]
have been precomputed and stored.  
Then, for any $r \ge 1$, the projection of $\mathbf{s}$ onto the Legendre polynomial
$P_{L+r}$ satisfies
\begin{equation}\label{eq:basic_projection}
\langle \mathbf{s}, \mathbf{P}_{L+r} \rangle
=
\langle \mathbf{s}^L, \mathbf{A}_r^L \rangle
+
\langle \mathbf{s}^{L-1}, \mathbf{B}_r^L \rangle,
\end{equation}
where $A_r^L$ and $B_r^L$ are the shifted Legendre polynomials defined by
\[
P_{L+r}
=
A_r^L P_L + B_r^L P_{L-1}.
\]

Moreover, if the total shift $r$ is decomposed as
\[
r = n_1 + n_2 + \cdots + n_k, \qquad n_i \in \mathbb{N},
\]
then the inner product $\langle \mathbf{s}, \mathbf{P}_{L+r} \rangle$ admits the
recursive expansion
\begin{equation}\label{eq:general_decomposition}
\langle \mathbf{s}, \mathbf{P}_{L+r} \rangle
=
\sum_{\tau_1 \in \{\mathbf{A},\mathbf{B}\}}
\sum_{\epsilon_1=0}^{1}
\sum_{\epsilon_2=0}^{1}
\cdots
\sum_{\epsilon_{k-1}=0}^{1}
\left\langle
\mathbf{s}^{L^*(\tau_1)},
\mathbf{V}_{\tau_1,\epsilon_1,\ldots,\epsilon_{k-1}}
\right\rangle,
\end{equation}
where
\begin{itemize}
\item $L_0 := L$ and $L_{i-1} := L + \sum_{j=1}^{i-1} n_j$ for $i \ge 1$;
\item
\[
\mathbf{s}^{L^*(\tau_1)} =
\begin{cases}
\mathbf{s}^L, & \tau_1 = \mathbf{A}, \\
\mathbf{s}^{L-1}, & \tau_1 = \mathbf{B};
\end{cases}
\]
\item the vector $\mathbf{V}_{\tau_1,\epsilon_1,\ldots,\epsilon_{k-1}}$ is defined by
\[
\mathbf{V}_{\tau_1,\epsilon_1,\ldots,\epsilon_{k-1}}
=
\mathbf{\tau_1}_{\,n_1-\epsilon_1}^{L}
\;\odot\;
\bigodot_{i=2}^{k}
\mathbf{\tau_i(\epsilon_{i-1})}_{\,n_i}^{L_{i-1}},
\]
with the selector
\[
\tau_i(\epsilon_{i-1}) =
\begin{cases}
\mathbf{A}, & \epsilon_{i-1}=0,\\
\mathbf{B}, & \epsilon_{i-1}=1.
\end{cases}
\]
\end{itemize}
All products $\odot$ are Hadamard (pointwise) products.
\end{thm}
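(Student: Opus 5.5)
The plan is to prove \eqref{eq:basic_projection} pointwise at the nodes, and then obtain \eqref{eq:general_decomposition} by iterating the composition rule of Proposition \ref{Prop-P_l} and expanding a product of $2\times 2$ matrices as a sum over index paths.

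\textbf{Step 1: the basic projection.} By Proposition \ref{Prop-P_l}, for each node $\theta_k$ we have
\[
P_{L+r}(\cos\theta_k)=A_r^L(\cos\theta_k)P_L(\cos\theta_k)+B_r^L(\cos\theta_k)P_{L-1}(\cos\theta_k).
\]
I multiply this by $s_k$ and sum over $k$. Since all the quantities $P_j(\cos\theta_k)$, $A_r^L(\cos\theta_k)$ and $B_r^L(\cos\theta_k)$ are real, no conjugation issue arises. Regrouping the products as $(s_kP_L(\cos\theta_k))\,A_r^L(\cos\theta_k)$, and similarly for the $B$ term, gives exactly
\[
\langle \mathbf{s}^L,\mathbf{A}_r^L\rangle+\langle \mathbf{s}^{L-1},\mathbf{B}_r^L\rangle .
\]
This step uses only the definition of $\mathbf{s}^j$ and bilinearity of the discrete inner product.

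\textbf{Step 2: iterating the composition rule.} Write
\[
M_i:=\begin{pmatrix}A^{L_{i-1}}_{n_i}&B^{L_{i-1}}_{n_i}\\ A^{L_{i-1}}_{n_i-1}&B^{L_{i-1}}_{n_i-1}\end{pmatrix},
\]
with $L_{i-1}$ as in the statement. Applying \eqref{eq:composition} inductively on $k$ (the case $k=1$ is Step 1) gives
\[
\begin{pmatrix}A_r^L&B_r^L\\ A_{r-1}^L&B_{r-1}^L\end{pmatrix}=M_kM_{k-1}\cdots M_1 ,
\]
pointwise at each node. I then expand the first row of this product as a sum over paths:
\[
(M_k\cdots M_1)_{1,\tau_1}=\sum_{\epsilon_1,\dots,\epsilon_{k-1}\in\{0,1\}}(M_k)_{1,\epsilon_{k-1}}(M_{k-1})_{\epsilon_{k-1},\epsilon_{k-2}}\cdots(M_1)_{\epsilon_1,\tau_1}.
\]
Here row index $\epsilon=0$ selects the shift $n_i$ and $\epsilon=1$ selects $n_i-1$, while column index $0$ selects $\mathbf A$ and $1$ selects $\mathbf B$. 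The factor $(M_1)_{\epsilon_1,\tau_1}$ is precisely $\boldsymbol{\tau_1}^{L}_{\,n_1-\epsilon_1}$. The column of each subsequent factor is chosen by the selector $\tau_i(\epsilon_{i-1})$.

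\textbf{Step 3: assembling the expansion.} I substitute this expansion for $A_r^L$ and $B_r^L$ into Step 1 and pull the finite sums out of the inner product. The Hadamard products then appear because all of the matrix identities hold node by node, so the resulting pointwise products of node values are exactly the vectors $\mathbf V_{\tau_1,\epsilon_1,\dots,\epsilon_{k-1}}$.

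\textbf{Main obstacle: matching indices.} The expected difficulty is the bookkeeping in Step 2. An interior factor $(M_i)_{\epsilon_i,\epsilon_{i-1}}$ depends on two consecutive path bits: its shift is $n_i-\epsilon_i$, and its letter is $\mathbf A$ or $\mathbf B$ according to $\epsilon_{i-1}$. The statement's $\mathbf V$, by contrast, records only the letter $\tau_i(\epsilon_{i-1})$ with shift $n_i$. I would first carry out the expansion carefully for $k=2$, where the formula reduces to the scalar relations listed after \eqref{eq:rs-product}, and check that it agrees with the statement. For general $k$ I would then make the row choice explicit, so that interior factors carry the shift $n_i-\epsilon_i$ and the last factor has row $1$. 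I would interpret the notation of the theorem in that sense, and perform an induction on $k$ in which $M_k$ is peeled off the left of the product.
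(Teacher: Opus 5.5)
Your route is the paper's route. The first identity comes from $P_{L+r}=A_r^LP_L+B_r^LP_{L-1}$ plus bilinearity, and the expansion comes from the composition rule \eqref{eq:composition} followed by distributing the product. Step 1 and your path expansion are correct. The paper, however, writes out only $k=2$ and then asserts that iterating ``gives exactly'' \eqref{eq:general_decomposition}. Your bookkeeping is the step it skips.

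The discrepancy you found is not a question of how to read the notation. With the printed shift $n_i$ on the interior factors, the identity is false for $k\ge 3$. Take $L=1$, $k=3$, $n_1=n_2=n_3=1$. Summing the printed $\mathbf V$ over $\epsilon_2$ factors out $A_1^3+B_1^3=\tfrac74\cos\theta-\tfrac34$. What remains is the (correct) $k=2$ expansion of $P_3$. So the right-hand side has weights $P_3(\cos\theta_k)\bigl(\tfrac74\cos\theta_k-\tfrac34\bigr)$, whereas $P_4=\tfrac74\cos\theta\,P_3-\tfrac34P_2$. The two sides differ by $\tfrac34\langle\mathbf s,\mathbf P_2-\mathbf P_3\rangle$, which equals $\tfrac32$ when $\mathbf s$ is the indicator of the node $\theta=\pi$.

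So you cannot prove the theorem as printed, and you should say so explicitly rather than ``interpret'' it. What your argument proves is the first identity together with the expansion in which
\[
\mathbf V_{\tau_1,\epsilon_1,\ldots,\epsilon_{k-1}}
=\mathbf{\tau_1}^{L}_{\,n_1-\epsilon_1}\odot\bigodot_{i=2}^{k-1}\mathbf{\tau_i(\epsilon_{i-1})}^{L_{i-1}}_{\,n_i-\epsilon_i}\odot\mathbf{\tau_k(\epsilon_{k-1})}^{L_{k-1}}_{\,n_k}.
\]
This coincides with the printed formula only when there are no interior factors, i.e.\ $k\le 2$. The paper's proof also says the expansion has $2^{k-1}$ terms, whereas the sum has $2^k$.

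Step 2 needs two small repairs.
\begin{itemize}
\item You index rows by $\epsilon\in\{0,1\}$, with $1$ meaning shift $n_i-1$. Yet you write the top row as $(\cdot)_{1,\tau_1}$ and say the last factor ``has row $1$''. Set $\epsilon_k:=0$ instead.
\item If you peel $M_k$ off the left, the inductive hypothesis must describe both rows of $M_{k-1}\cdots M_1$. The bottom row $(A^L_{r'-1},B^L_{r'-1})$, with $r'=n_1+\cdots+n_{k-1}$, enters the computation, so a first-row-only hypothesis does not close the induction. You can fix this in any of three ways: state the path formula for an arbitrary terminal row; peel $M_1$ off the right and apply the first-row hypothesis at base level $L_1$; or simply cite the entry formula for a product of matrices.
\end{itemize}
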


\begin{proof}
The identity \eqref{eq:basic_projection} follows directly from the definition
\[
P_{L+r} = A_r^L P_L + B_r^L P_{L-1},
\]
together with linearity of the inner product and the definition
$\mathbf{s}^j = \mathbf{s}\cdot\mathbf{P}_j$.\\

Now assume that $r = n_1 + n_2$.  
Using the composition rule for the shifted coefficients,
\[
\begin{pmatrix}
A_{r}^L & B_{r}^L \\
A_{r-1}^L & B_{r-1}^L
\end{pmatrix}
=
\begin{pmatrix}
A_{n_2}^{L+n_1} & B_{n_2}^{L+n_1} \\
A_{n_2-1}^{L+n_1} & B_{n_2-1}^{L+n_1}
\end{pmatrix}
\begin{pmatrix}
A_{n_1}^L & B_{n_1}^L \\
A_{n_1-1}^L & B_{n_1-1}^L
\end{pmatrix},
\]
and inserting this into \eqref{eq:basic_projection}, we obtain
\[
\langle \mathbf{s}, \mathbf{P}_{L+r} \rangle
=
\langle \mathbf{s}^L, A_{n_2}^{L+n_1}A_{n_1}^L + B_{n_2}^{L+n_1}A_{n_1-1}^L \rangle
+
\langle \mathbf{s}^{L-1}, A_{n_2}^{L+n_1}B_{n_1}^L + B_{n_2}^{L+n_1}B_{n_1-1}^L \rangle.
\]
By the bilinearity of the inner product, this splits into four terms involving
Hadamard products of lower-degree shifted polynomials.\\

Iterating this argument for a decomposition
$r = n_1 + \cdots + n_k$ yields a binary branching at each step, producing
$2^{k-1}$ terms.  
Each term corresponds to a unique sequence
$(\tau_1,\epsilon_1,\ldots,\epsilon_{k-1})$ and has the form
\[
\left\langle
\mathbf{s}^{L^*(\tau_1)},
\mathbf{\tau_1}_{\,n_1-\epsilon_1}^{L}
\odot
\bigodot_{i=2}^{k}
\mathbf{\tau_i(\epsilon_{i-1})}_{\,n_i}^{L_{i-1}}
\right\rangle,
\]
which gives exactly the expansion \eqref{eq:general_decomposition}.\\

This establishes a divide-and-conquer structure for the computation of
$\langle \mathbf{s}, \mathbf{P}_{L+r} \rangle$, completing the proof.
\end{proof}

\begin{ex}

Let \( k = 2 \). Consider a total advance of \( r = n_1 + n_2 \) steps from level \( L \). The generalized inner product expression decomposes into 4 terms. \\

The expression becomes
\[
\langle \mathbf{s}, \mathbf{P}_{L+r} \rangle=\langle \mathbf{s}^L, \mathbf{A}_r^L \rangle + \langle \mathbf{s}^{L-1}, \mathbf{B}_r^L \rangle = \sum_{\tau_1 \in \{\mathbf{A},\mathbf{B}\}} \sum_{\epsilon_1=0}^{1} \left\langle \mathbf{s}^{L^*(\tau_1)}, \mathbf{V}_{\tau_1, \epsilon_1} \right\rangle,
\]
expanded as
\[
= \left\langle \mathbf{s}^{L^*(\mathbf{A})}, \mathbf{V}_{\mathbf{A}, 0} \right\rangle + \left\langle \mathbf{s}^{L^*(\mathbf{A})}, \mathbf{V}_{\mathbf{A}, 1} \right\rangle + \left\langle \mathbf{s}^{L^*(\mathbf{B})}, \mathbf{V}_{\mathbf{B}, 0} \right\rangle + \left\langle \mathbf{s}^{L^*(\mathbf{B})}, \mathbf{V}_{\mathbf{B}, 1} \right\rangle.
\]
Solving the first term
\[
\left\langle \mathbf{s}^{L^*(\mathbf{A})}, \mathbf{V}_{\mathbf{A}, 0} \right\rangle .
\]
For \( \tau_1 = \mathbf{A} \), \( \epsilon_1 = 0 \)
\begin{itemize}
    \item Compute
    $$\mathbf{s}^{L^*(\mathbf{A})} = \mathbf{s}^L \quad \text{(since $\tau_1 = \mathbf{A}$)}.$$
    
    \item Compute
    $$\mathbf{V}_{\mathbf{A}, 0} = \mathbf{\tau_2(\epsilon_1)}_{n_2}^{L_{1}} \odot \mathbf{\tau_1}_{n_1 - \epsilon_1}^{L},$$
    where \(\tau_2(\epsilon_1) = \mathbf{A}\) (since \(\epsilon_1 = 0\)) and \(L_1 = L + n_1\), so
    \[
    \mathbf{V}_{\mathbf{A},0} = \mathbf{A}_{n_2}^{L + n_1}  \mathbf{A}_{n_1}^{L}.
    \]
    Therefore
    \[
    \left\langle \mathbf{s}^{L^*(\mathbf{A})}, \mathbf{V}_{\mathbf{A}, 0} \right\rangle = \left\langle \mathbf{s}^L, \mathbf{A}_{n_2}^{L + n_1} \mathbf{A}_{n_1}^{L} \right\rangle.
    \]
\end{itemize}
Analogously for the remaining terms:
\begin{itemize}
    \item $\left\langle \mathbf{s}^{L^*(\mathbf{A})}, \mathbf{V}_{\mathbf{A}, 1} \right\rangle = \left\langle \mathbf{s}^L, \mathbf{B}_{n_2}^{L + n_1} \mathbf{A}_{n_1-1}^{L} \right\rangle$.
    \item $\left\langle \mathbf{s}^{L^*(\mathbf{B})}, \mathbf{V}_{\mathbf{B}, 0} \right\rangle = \left\langle \mathbf{s}^{L-1}, \mathbf{A}_{n_2}^{L + n_1}  \mathbf{B}_{n_1}^{L} \right\rangle$.
    \item $\left\langle \mathbf{s}^{L^*(\mathbf{B})}, \mathbf{V}_{\mathbf{B}, 1} \right\rangle = \left\langle \mathbf{s}^{L-1}, \mathbf{B}_{n_2}^{L + n_1}  \mathbf{B}_{n_1-1}^{L} \right\rangle$.
\end{itemize}
Thus
\begin{align*}
   \langle \mathbf{s}, \mathbf{P}_{L+r} \rangle= \langle \mathbf{s}^L, \mathbf{A}_r^L \rangle + \langle \mathbf{s}^{L-1}, \mathbf{B}_r^L \rangle &= \sum_{\tau_1 \in \{\mathbf{A},\mathbf{B}\}} \sum_{\epsilon_1=0}^{1} \left\langle \mathbf{s}^{L^*(\tau_1)}, \mathbf{V}_{\tau_1, \epsilon_1} \right\rangle\\
    &=\langle \mathbf{s}^L, \mathbf{A}_{n_2}^{L + n_1} \ \mathbf{A}_{n_1}^{L} \rangle + \langle \mathbf{s}^L, \mathbf{B}_{n_2}^{L + n_1}  \mathbf{A}_{n_1-1}^{L} \rangle \\
    &\quad + \langle \mathbf{s}^{L-1}, \mathbf{A}_{n_2}^{L + n_1}  \mathbf{B}_{n_1}^{L} \rangle + \langle \mathbf{s}^{L-1}, \mathbf{B}_{n_2}^{L + n_1} \mathbf{B}_{n_1-1}^{L} \rangle.
\end{align*}
Following this approach leads to a divide-and-conquer scheme that decomposes the original problem into more manageable computations, leveraging the recursive structure of the polynomials $A_r^L$ and $B_r^L$.\\%

Recall that the Fourier transform of $f$ on $\text{SU(2)}$ is the sequence of matrices $\{ \hat{f}(l) \}_{l \in \frac{1}{2} \mathbb{N}_0}$, where each matrix $\hat{f}(l) \in \mathbb{C}^{(2l+1) \times (2l+1)}$ is defined by
\begin{align*}
    \hat{f}(l)_{mn} := \int_{\text{SU(2)}} f(x) \; \overline{t^l_{nm}(x)} \, dx = \langle f, t^l_{nm} \rangle_{\ell^2(\text{SU(2)})},
\end{align*}
for $-l \leq m, n \leq l$, with $l - m, l - n \in \mathbb{Z}$.
So far, we have only computed the $(0, 0)$-entry ($m=n=0)$ of the matrix $\hat{f}(l)$, which corresponds to a single value in a matrix of size $2l+1$. To compute the remaining entries of the matrix, that is, in the case where $m \ne n \ne 0$, we consider the expression
\begin{align*}
    \hat{f}(l) \approx \sum_{k=0}^{N - 1} [\mathbf{s}]_k P^l_{nm}(\cos\theta_k) = \langle \mathbf{s}, \mathbf{P}^l_{nm} \rangle,
\end{align*}
where $P^l_{nm}$ is a Jacobi polynomial. Since these polynomials also satisfy a recurrence relation, an analogous approach to the one previously described can be applied.
\end{ex}
This procedure, which essentially constitutes an implementation of the Fast Fourier Transform on $\text{SU}(2)$, not only simplifies the computations but also enables future optimizations and applications in contexts where computational efficiency is critical.

\section{Order of Operations for the FT on SU(2)}\label{sec:order-FT}

We now analyze the computational complexity of computing the Fourier transform (FT) on the compact Lie group $\mathrm{SU}(2)$ using a direct (non-optimized) method. This corresponds to a brute-force evaluation of all Fourier coefficients using numerical quadrature over a discretized version.

\begin{thm}[Direct Computation Complexity]
Let $f : \mathrm{SU}(2) \to \mathbb{C}$ be a function bandlimited to degree $N$. If the Fourier coefficients $\hat{f}(l)_{mn}$ are approximated via numerical quadrature on a discretized grid of size $N^3$, the total computational complexity of the direct method is $\mathcal{O}(N^6)$.
\end{thm}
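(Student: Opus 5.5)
The argument is a direct operation count with two factors: the number of Fourier coefficients to be produced, and the cost of computing each one by brute-force quadrature. From the Riemann-sum discretization above, each coefficient is approximated by
\[
\hat{f}(l)_{mn} \approx \frac{1}{8\pi^2} \sum_{k=0}^{N-1}\sum_{j_1=0}^{N-1}\sum_{j_2=0}^{N-1} f(\phi_{j_1},\theta_k,\psi_{j_2})\, P^l_{nm}(\cos\theta_k)\sin\theta_k\, e^{in\phi_{j_1}}e^{im\psi_{j_2}} .
\]
The direct method does not exploit the factorization into a 2D DFT. It evaluates this triple sum as written, with $N^3$ summands, and each summand costs a bounded number of multiplications and additions, provided the kernel values are available. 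So each coefficient costs $\Theta(N^3)$ operations.

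Next we count the coefficients. Since $f$ is band-limited, only levels $l<N$ with $l\in\tfrac12\mathbb{N}_0$ contribute, and level $l$ contributes a $(2l+1)\times(2l+1)$ matrix. With $d=2l+1$ running over $1,\dots,2N$, the total is
\[
\sum_{d=1}^{2N} d^2=\frac{2N(2N+1)(4N+1)}{6}=\Theta(N^3).
\]
If we restrict to integer $l$, the count is $\sum_{l=0}^{N-1}(2l+1)^2$, which is still $\Theta(N^3)$, so the choice of convention does not affect the order. Multiplying the two factors gives $\Theta(N^3)\cdot\Theta(N^3)=\mathcal{O}(N^6)$. The final addition of the $N^3$ terms per coefficient is absorbed in this bound.

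The main point needing care is that the cost of the kernel values $P^l_{nm}(\cos\theta_k)$ and of the exponentials does not exceed $\mathcal{O}(N^6)$. The exponentials $e^{in\phi_{j_1}}$ and $e^{im\psi_{j_2}}$ take only $\mathcal{O}(N^2)$ distinct values and can be tabulated once. For the Wigner-type functions, we tabulate $P^l_{nm}(\cos\theta_k)$ for all $(l,m,n,k)$. That is $\mathcal{O}(N^3)\cdot N=\mathcal{O}(N^4)$ values, and each can be generated at $\mathcal{O}(1)$ amortized cost through the three-term (Jacobi) recurrence in $l$, as in Lemma \eqref{eq:P_l} and its Jacobi analogue. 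Even evaluating each value from scratch by its explicit formula would cost $\mathcal{O}(N)$, giving $\mathcal{O}(N^5)\subset\mathcal{O}(N^6)$. So the precomputation is dominated by the summation step, and the total cost of the direct method is $\mathcal{O}(N^6)$.
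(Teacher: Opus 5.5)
Your proof is correct and follows the paper's argument: $\Theta(N^3)$ coefficients, each charged $\mathcal{O}(N^3)$ operations for the brute-force quadrature sum over the $N^3$ grid, giving $\mathcal{O}(N^6)$. The paper simply assumes that each basis-function evaluation costs $\mathcal{O}(1)$; your tabulation of the exponentials and of $P^l_{nm}(\cos\theta_k)$, costing at most $\mathcal{O}(N^5)$, justifies that assumption, and your count $\sum_{d=1}^{2N} d^2$ shows the order is the same whether $l$ ranges over $\tfrac12\mathbb{N}_0$ or over the integers only.
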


\begin{proof}
The Fourier transform consists of the collection of matrix coefficients $\hat{f}(l)_{mn}$ for $0 \leq l < N$. Since the dimension of the irreducible representation for a given $l$ is $2l+1$, the total number of coefficients to compute is
\begin{equation}
    \sum_{l=0}^{N-1} (2l+1)^2 = \sum_{l=0}^{N-1} (4l^2 + 4l + 1) = \mathcal{O}(N^3).
\end{equation}

Each coefficient is defined by the integral over the group, $\hat{f}(l)_{mn} = \int_{\mathrm{SU}(2)} f(g) \, \overline{t^l_{mn}(g)} \, dg$. Approximating this integral using a quadrature rule on a grid of size $N^3$ (derived from uniform discretizations of the Euler angles) yields the sum
\begin{equation}
    \hat{f}(l)_{mn} \approx \sum_{i=1}^{N^3} f(g_i) \cdot \overline{t^l_{mn}(g_i)}.
\end{equation}
Evaluating this sum requires $\mathcal{O}(N^3)$ operations per coefficient (assuming basis function evaluations are $\mathcal{O}(1)$).

Therefore, the total complexity is the product of the number of coefficients and the cost per coefficient
\begin{equation*}
    \underbrace{\mathcal{O}(N^3)}_{\text{num. coeffs}} \times \underbrace{\mathcal{O}(N^3)}_{\text{cost per coeff}} = \mathcal{O}(N^6). \qedhere
\end{equation*}
\end{proof}

This complexity makes the direct method impractical for large bandlimits $N$, and motivates the development of fast algorithms that the group structure and its representations to reduce the computational burden.

\section{Order of Operations for the FFT on SU(2)}\label{sec:order-fftsu2}

Throughout this section, we count one floating-point multiplication or addition as one arithmetic operation. We also analyze the computational complexity of the FFT on SU(2). The strategy is to decompose the algorithm into two main stages: a 2D FFT over $(\phi, \psi)$ and a weighted sum involving Legendre polynomials over $\theta$.

\begin{prop}[First Stage Complexity]\label{P_1}
Let $f$ be a function bandlimited to degree $N$, sampled on a uniform grid of size $N \times N \times N$ over the Euler angles $(\phi, \theta, \psi)$. The projection of the function onto the torus (integration over $\phi$ and $\psi$) via a 2D FFT can be computed in $\mathcal{O}(N^3 \log N)$ operations.
\end{prop}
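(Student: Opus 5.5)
The plan is to count operations directly from the factorization of the first stage given earlier:
\[
f_2(\theta_k)(n,m) \;=\; \sum_{j_1=0}^{N-1}\sum_{j_2=0}^{N-1} e^{in\phi_{j_1}}e^{im\psi_{j_2}}f(\phi_{j_1},\theta_k,\psi_{j_2}).
\]
For each fixed $k\in\{0,\dots,N-1\}$ this is a two-dimensional DFT of an $N\times N$ array. Once the per-slice cost is known, summing over the $N$ values of $\theta_k$ gives the claim.

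\textbf{Step 1: reduce to standard DFTs.} Since $\phi_{j_1}=-\pi+j_1\frac{2\pi}{N-1}$, each exponential factors as
\[
e^{in\phi_{j_1}} = e^{-in\pi}\, e^{2\pi i n j_1/(N-1)}.
\]
The prefactor $e^{-in\pi}$ is a phase that can be applied after the transform at cost $\mathcal{O}(1)$ per output. The remaining sum is a standard DFT in $j_1$; the same reduction applies to $\psi$ and $j_2$. The grid spacing is $2\pi/(N-1)$ with $N$ samples, so the endpoints $\pm\pi$ are both sampled. I would handle this by treating the grid as a length-$(N-1)$ periodic grid: merge the duplicated endpoint sample, or equivalently pad or wrap, which costs $\mathcal{O}(N)$ per line. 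This discrepancy is harmless for the asymptotics.

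\textbf{Step 2: apply the row--column FFT.} I would use the separable multidimensional FFT recalled in Section~\ref{SEC:Prelim}. First perform $N$ one-dimensional FFTs of size $N$ along $j_2$, one for each fixed $j_1$. Then perform $N$ one-dimensional FFTs along $j_1$. By the Cooley--Tukey bound, each costs $\mathcal{O}(N\log N)$, so one slice costs
\[
\mathcal{O}(N^2\log N),
\]
together with $\mathcal{O}(N^2)$ phase multiplications. Only frequencies $|m|,|n|<N$ are needed, and these fit within the $\mathcal{O}(N)$ outputs per axis.

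\textbf{Step 3: sum over $\theta_k$.} Repeating Step 2 for the $N$ values of $\theta_k$ gives
\[
N\cdot\mathcal{O}(N^2\log N) = \mathcal{O}(N^3\log N).
\]
The weight multiplication $f_2(\theta_k)\sin\theta_k$ adds only $\mathcal{O}(N^3)$ operations, which is dominated by this term.

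\textbf{Main obstacle.} The only delicate point is justifying the $\mathcal{O}(M\log M)$ FFT cost when the length $M=N-1$ (or $N$) is not a power of two. I would first invoke a mixed-radix Cooley--Tukey algorithm. If $M$ has large prime factors, I would instead use Bluestein's chirp-$z$ reduction to a power-of-two convolution of length at most $4M$. Either way the cost is $\mathcal{O}(M\log M)$ for every $M$, so the bound holds without assumptions on $N$.
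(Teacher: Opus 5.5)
Your proposal is correct and follows the paper's proof. For each of the $N$ values of $\theta_k$, the inner double sum is a 2D DFT computed by row--column FFTs in $\mathcal{O}(N^2\log N)$ operations, which gives $\mathcal{O}(N^3\log N)$ in total. Your extra care with the phase $e^{-in\pi}$, the duplicated endpoint of the $2\pi/(N-1)$ grid, and non-power-of-two lengths goes beyond what the paper writes and is sound. The only point to add concerns half-integer $m,n$ (when $l\in\frac12+\mathbb{N}_0$): there the kernel $e^{2\pi i n j/(N-1)}$ is anti-periodic in $j$, so your endpoint merge needs a sign flip, and you must pre-multiply the samples by the twiddle $e^{\pi i j/(N-1)}$ to reduce to a standard DFT. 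This costs only $\mathcal{O}(N)$ per line and leaves the bound unchanged.
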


\begin{proof}
Recall that the discretized expression for the Fourier coefficients is given by
\begin{equation*}
    \hat{f}(l)_{mn} \approx \sum_{k=0}^{N-1} P^l_{nm}(\cos \theta_k) \sin \theta_k \left( \sum_{j_1=0}^{N-1} \sum_{j_2=0}^{N-1} f(\phi_{j_1}, \theta_k, \psi_{j_2}) e^{in\phi_{j_1}} e^{im\psi_{j_2}} \right).
\end{equation*}

The term inside the parenthesis corresponds to a two-dimensional Discrete Fourier Transform (DFT) with respect to the variables $\phi$ and $\psi$. We can define the intermediate function $f_2(\theta_k; m, n)$ as the result of this transform
\begin{equation}
    f_2(\theta_k; m, n) := \sum_{j_1=0}^{N-1} \sum_{j_2=0}^{N-1} f(\phi_{j_1}, \theta_k, \psi_{j_2}) e^{in\phi_{j_1}} e^{im\psi_{j_2}}.
\end{equation}

Since the grid size is $N \times N \times N$, for any fixed $\theta_k$, the computation of $f_2$ is equivalent to a 2D FFT on an $N \times N$ grid. As established in the preliminaries, the cost of a single 2D FFT is $\mathcal{O}(N^2 \log N)$.\\%

The algorithm performs this operation independently for each of the $N$ discrete values of $\theta_k$. Therefore, the total cost for this stage is
\begin{align*}
    \text{Cost}_{\text{Stage 1}} &= (\text{Number of } \theta_k \text{ values}) \times (\text{Cost of 2D FFT}) \\
    &= N \times \mathcal{O}(N^2 \log N) \\
    &= \boldsymbol{\mathcal{O}(N^3 \log N)}. \qedhere
\end{align*}
\end{proof}

The problem is now reduced to computing
\begin{align}\label{EQ_OP_1}
    \hat{f}(l)_{mn}\approx\sum_{k=0}^{N-1}[s]_k P^l_{nm}(\cos \theta_k) = \langle \mathbf{s}, \mathbf{P}^l_{nm} \rangle,
\end{align}

where $[s]_k = f_2(\theta_k) \sin(\theta_k)$ is an arbitrary input vector $\mathbf{s}$ with its $k^{\text{th}}$ component equal to $[s]_k$. This must be performed for all relevant values of $l, m, n$.\\%

The remaining problem is the computation of the $k$-sum, which constitutes a Discrete Legendre Transform (or more precisely, a transform associated with the polynomials $P^l_{nm}$).

\begin{prop}[Second Stage Complexity]\label{P_2}
The computation of the Fourier coefficients $\hat{f}(l)_{mn}\approx  \langle \mathbf{s}, \mathbf{P}^l_{nm} \rangle$ via the recursive expansion of the Legendre polynomials of depth $k$ (where the shift $r$ is decomposed as $r = \sum_{i=1}^k n_i$) entails a total computational complexity of
\begin{equation*}
    \boldsymbol{\mathcal{O}(2^k k N^4)}.
\end{equation*}
\end{prop}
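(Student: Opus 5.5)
The plan is to count operations coefficient by coefficient and then sum over all coefficients. Three ingredients enter. First, the total number of coefficients $(l,m,n)$, which is $\sum_{l<N}(2l+1)^2=\mathcal{O}(N^3)$, exactly as in the proof of the direct computation complexity theorem. Second, the number of terms in the expansion \eqref{eq:general_decomposition} of Theorem \ref{thm:final_legendre_fft}. Third, the cost of evaluating a single term $\langle \mathbf{s}^{L^*(\tau_1)},\mathbf{V}_{\tau_1,\epsilon_1,\ldots,\epsilon_{k-1}}\rangle$. For $(m,n)\neq(0,0)$ I would argue that the Jacobi polynomials $P^l_{nm}$ satisfy a three-term recurrence of the same shape. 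The shifted coefficients and the composition rule of Proposition \ref{Prop-P_l} then carry over verbatim, and the same count applies to every entry.

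\textbf{Term count.} The sum runs over $\tau_1\in\{\mathbf{A},\mathbf{B}\}$ and $\epsilon_1,\ldots,\epsilon_{k-1}\in\{0,1\}$, so there are $2\cdot 2^{k-1}=2^k$ inner products.

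\textbf{Cost per term.} Each vector $\mathbf{V}$ is a Hadamard product of $k$ vectors of length $N$, namely the sampled shifted polynomials $\mathbf{\tau_i}^{L_{i-1}}_{n_i}$. These are assumed precomputed, or obtainable in $\mathcal{O}(N)$ each from the scalar recurrences of the preceding proposition. Forming the product costs $(k-1)N$ multiplications. The inner product with the stored vector $\mathbf{s}^{L^*(\tau_1)}$ costs a further $2N$ operations. One term therefore costs $\mathcal{O}(kN)$, and one coefficient costs $\mathcal{O}(2^k kN)$. Adding the $2^k-1$ scalar results is lower order.

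\textbf{Assembly.} Multiplying by the $\mathcal{O}(N^3)$ coefficients gives $\mathcal{O}(2^k k N^4)$. The stored vectors $\mathbf{s}^j=\mathbf{s}\cdot\mathbf{P}_j$ have to be formed for each relevant $(m,n)$ and $j\le L$. That costs $\mathcal{O}(N)$ per vector, so $\mathcal{O}(N^4)$ in total, which is absorbed into the bound.

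\textbf{Main obstacle.} The delicate step is justifying the per-term cost. I must make precise which quantities count as precomputed, namely the samples of $A^{L}_r,B^{L}_r$ at the nodes $\cos\theta_k$. Their tabulation must not exceed $\mathcal{O}(2^kkN^4)$. I would handle this by bounding the number of distinct tables: each pair $(L_{i-1},n_i)$ together with $(m,n)$ yields one length-$N$ vector via $\mathcal{O}(n_iN)$ recurrence steps. This keeps the precomputation within $\mathcal{O}(N^4)$ for fixed $k$, since $\sum n_i = r\le N$.
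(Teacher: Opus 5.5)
Your main count is the paper's proof. You get $2\cdot 2^{k-1}=2^k$ terms, each costing $\mathcal{O}(kN)$ for the $k-1$ Hadamard products plus an $\mathcal{O}(N)$ inner product with the stored $\mathbf{s}^{L^*(\tau_1)}$. That gives $\mathcal{O}(2^k kN)$ per coefficient and $\mathcal{O}(2^k kN^4)$ over the $\mathcal{O}(N^3)$ coefficients. The paper stops there and tacitly treats the sampled vectors $\mathbf{A}^{L_{i-1}}_{n_i},\mathbf{B}^{L_{i-1}}_{n_i}$ as given. The one place you go beyond it, the bound on tabulating these vectors, does not follow from the argument you give.

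The estimate $\sum_i\mathcal{O}(n_iN)=\mathcal{O}(rN)$ is the table-building cost for a \emph{single} coefficient $(l,m,n)$. Different degrees $l=L+r$ need different decompositions $r=n_1+\cdots+n_k$, hence different pairs $(L_{i-1},n_i)$. Charged this way, the cost is up to $\mathcal{O}(N^2)$ per coefficient, i.e.\ up to $\mathcal{O}(N^5)$ in total, which for fixed $k$ exceeds the bound you are proving. Likewise, your remark that each table is ``obtainable in $\mathcal{O}(N)$'' from the scalar recurrence holds only for one step of an ongoing sweep, not for a table built from scratch.

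To stay within $\mathcal{O}(N^4)$, the tables must be shared across $l$. For instance, for each $(m,n)$ fix $L$ and $n_1,\dots,n_{k-1}$, and let only $n_k=r-\sum_{i<k}n_i$ vary with $l$. Then:
\begin{itemize}
\item the base levels $L_0,\dots,L_{k-1}$ do not depend on $l$;
\item the vectors with shift indices $n_1-1,n_1,\dots,n_{k-1}$ form a fixed set costing $\mathcal{O}(N^2)$;
\item one forward sweep of the scalar recurrences from $L_{k-1}$ (for $A$ and for $B$) produces every $\mathbf{A}^{L_{k-1}}_{n_k},\mathbf{B}^{L_{k-1}}_{n_k}$ at $\mathcal{O}(N)$ per step.
\end{itemize}
This costs $\mathcal{O}(N^2)$ per $(m,n)$ and $\mathcal{O}(N^4)$ overall. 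Either make such a choice of decompositions explicit, or adopt the paper's convention that the shifted vectors are precomputed and drop the tabulation claim.
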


\begin{proof}
The problem is reduced to computing the inner product between an input vector $\mathbf{s}$ (where $[s]_k = f_2(\theta_k) \sin \theta_k$) and the polynomial vector $\mathbf{P}^l_{nm}$
\begin{equation*} 
    \hat{f}(l)_{mn} \approx \langle \mathbf{s}, \mathbf{P}^l_{nm} \rangle.
\end{equation*}

Using the recurrence relation for a shift of $r$ steps, decomposed into $k$ sub-shifts $n_1, \dots, n_k$, the inner product expands into the sum
\begin{equation*}
    \langle \mathbf{s}, \mathbf{P}_{L+r} \rangle = \sum_{\tau_1 \in \{\mathbf{A},\mathbf{B}\}} \sum_{\epsilon_1=0}^{1} \cdots \sum_{\epsilon_{k-1}=0}^{1} \left\langle \mathbf{s}^{L^*(\tau_1)}, \mathbf{V}_{\tau_1, \epsilon_1, \ldots, \epsilon_{k-1}} \right\rangle.
\end{equation*}

We analyze the cost in three steps\\%

\textbf{1. Number of Terms:}
For each $\tau_1 \in \{A, B\}$ and each binary vector $\vec{\epsilon} \in \{0,1\}^{k-1}$, there are distinct terms. The total number of terms in the summation is
\begin{equation*}
    2 \cdot 2^{k-1} = 2^k.
\end{equation*}

\textbf{2. Cost per Term:}
Each term involves computing the vector $\mathbf{V}_{\tau_1, \vec{\epsilon}}$ via Hadamard products
\begin{equation*}
    \mathbf{V}_{\tau_1, \vec{\epsilon}} =  \mathbf{\tau_1}_{n_1 - \epsilon_1}^{L} \odot \left( \bigodot_{i=2}^{k} \mathbf{\tau_i(\epsilon_{i-1})}_{n_i}^{L_{i-1}} \right) .
\end{equation*}
Then we are multiplying $k$ Hadamard vectors (one for each $i = 1, 2, \dots, k$)

\begin{itemize}
    \item $\mathbf{\tau_1}_{n_1 - \epsilon_1}^{L}$,
    \item \dots
    \item $\mathbf{\tau_k(\epsilon_{k-1})}_{n_k}^{L_{k-1}}$.
\end{itemize}

The Hadamard product of two vectors of length $L$ costs
\[
\mathcal{O}(L),
\]
because it requires $N$ scalar multiplications. Additionally, there are $k - 1$ Hadamard products in total, resulting in
\[
(k-1) \cdot \mathcal{O}(N) = \mathcal{O}(kN).
\]

Now consider the expression
\[
\mathbf{s}^{L^*(\tau_1)}.
\]
The input vector $\mathbf{s}^{L^*(\tau_1)}$ used in each inner product is defined as
\[
 \mathbf{s}^{L^*(\tau_1)} = 
 \begin{cases} 
 \mathbf{s} \cdot \mathbf{P}_L & \text{if } \tau_1 = \mathbf{A}, \\
 \mathbf{s} \cdot \mathbf{P}_{L-1} & \text{if } \tau_1 = \mathbf{B}.
 \end{cases}
\]
It is assumed that $\mathbf{s}^L$ and $\mathbf{s}^{L-1}$ are precomputed values before entering the recurrence (when we start advancing $r$ steps), therefore this projection costs 
\[
\mathcal{O}(N).
\]
Subsequently, the total cost of computing (assuming precomputed projected inputs).  Thus, the cost per term is
\[
\left\langle \mathbf{s}^{L^*(\tau_1)}, \mathbf{V}_{\tau_1, \epsilon_1, \ldots, \epsilon_{k-1}} \right\rangle,
\]
is
\[
   \mathcal{O}(kN + N) = \mathcal{O}(kN).
\]

\textbf{3. Total Complexity:}
The cost for a single coefficient $\hat{f}(l)_{mn}$ combines the number of terms and the cost per term
\begin{equation*}
    \text{Cost}_{\text{coeff}} = \underbrace{2^k}_{\text{terms}} \cdot \underbrace{\mathcal{O}(kN)}_{\text{cost/term}} = \mathcal{O}(2^k k N).
\end{equation*}
Since there are $\mathcal{O}(N^3)$ total coefficients to compute (summing over all $l, m, n$), the total complexity is
\begin{equation*}
    \mathcal{O}(N^3) \cdot \mathcal{O}(2^k k N) = \boldsymbol{\mathcal{O}(2^k k N^4)}. \qedhere
\end{equation*}
\end{proof}

With the complexity bounds for both the projection phase (Stage 1) and the recursive transformation (Stage 2) established, we can now synthesize these results to determine the overall computational cost of the algorithm. Since the complexity of the second stage depends on the recursion depth $k$, the following theorem aggregates these components to identify the optimal parameter $k$ that minimizes the computational burden.

\begin{thm}[Optimal Algorithm Complexity]\label{TEO_FIN}
The total computational complexity of the proposed algorithm for computing the Fourier transform on $\mathrm{SU}(2)$ is minimized by choosing a recursion depth of $k=1$, yielding a total cost of:
\begin{equation*}
    \boldsymbol{\mathcal{O}(N^4)}.
\end{equation*}
\end{thm}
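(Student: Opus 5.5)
The plan is to add the two stage costs already established and then optimise over the single free parameter $k$. By Proposition \ref{P_1}, the first stage (one 2D FFT over $(\phi,\psi)$ for each of the $N$ values of $\theta_k$) costs $\mathcal{O}(N^3\log N)$. By Proposition \ref{P_2}, the second stage (the discrete Legendre/Jacobi contraction over $\theta$ via the recursive expansion of Theorem \ref{thm:final_legendre_fft} with depth $k$) costs $\mathcal{O}(2^k k N^4)$. The stages run sequentially, so I would write the total as
\begin{equation*}
T(N,k) = \mathcal{O}(N^3\log N) + \mathcal{O}(2^k k N^4).
\end{equation*}
No new operation counting is needed beyond what Propositions \ref{P_1} and \ref{P_2} provide, apart from one step discussed below.

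The next step is to minimise over $k$. I would treat $k$ as an integer with $1\le k$; the case $k=1$ corresponds to the undecomposed identity \eqref{eq:basic_projection}, where $r=n_1$. The function $g(k)=2^k k$ is strictly increasing on $\mathbb{N}$, since
\begin{equation*}
g(k+1)/g(k)=2(k+1)/k>1.
\end{equation*}
So the second-stage bound is smallest at $k=1$, where $g(1)=2$ and the cost is $\mathcal{O}(2N^4)=\mathcal{O}(N^4)$. The first stage does not depend on $k$, so the minimiser of $T$ is also $k=1$. Since $N^3\log N = o(N^4)$, the total is $\mathcal{O}(N^4)$. I would then compare this with the $\mathcal{O}(N^6)$ count from the direct-computation theorem in Section \ref{sec:order-FT} to confirm the claimed reduction.

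The main obstacle is justifying that the per-coefficient cost of $\mathcal{O}(N)$ at $k=1$ is legitimate. Proposition \ref{P_2} assumes that $\mathbf{s}^L$, $\mathbf{s}^{L-1}$ and the sampled shifted polynomials $\mathbf{A}^L_r,\mathbf{B}^L_r$ are already available. So I would also bound the cost of producing them. I would generate all $\mathbf{P}^l_{nm}$ on the $N$ nodes using the three-term recurrence from the Lemma and the matrix recurrence of the following Proposition (and their Jacobi analogues). Each step costs $\mathcal{O}(N)$ per $(m,n)$ pair, which gives $\mathcal{O}(N)\cdot\mathcal{O}(N^3)=\mathcal{O}(N^4)$ in total. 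Forming the Hadamard products $\mathbf{s}^j=\mathbf{s}\cdot\mathbf{P}_j$ has the same cost. This preprocessing therefore does not exceed the claimed bound. I would also note that the $\mathcal{O}$-bounds in Proposition \ref{P_2} are uniform in $k$ only with the explicit factor $2^k k$. Hence \emph{minimised} is to be read as minimising the leading constant, while the order $N^4$ is the same for any fixed $k$; I would state the theorem's conclusion in that sense.
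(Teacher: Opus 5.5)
Your proposal is correct and follows the paper's proof. You add the $\mathcal{O}(N^3\log N)$ and $\mathcal{O}(2^k k N^4)$ stage costs from Propositions \ref{P_1} and \ref{P_2}, show that $k2^k$ is increasing (the paper uses the logarithmic derivative $\ln 2 + 1/k>0$ on real $k$, you use the integer ratio $2(k+1)/k$), take $k=1$, and absorb $N^3\log N$ into $N^4$. Your extra check that generating the sampled polynomials and the products $\mathbf{s}^j$ by recurrence costs only $\mathcal{O}(N^4)$ goes beyond the paper, which simply assumes these are precomputed, and your reading of ``minimised'' as a statement about the leading constant is all the paper's argument actually establishes.
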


\begin{proof}
The total computational operations $T(k)$ is the sum of the costs derived in the two previous propositions (the 2D FFT projection and the Recursive Legendre Transform). Combining these results, we obtain
\begin{align*}
    T(k) &= \text{Cost}_{\text{Stage 1}} + \text{Cost}_{\text{Stage 2}} \\
         &= \mathcal{O}(N^3 \log N) + \mathcal{O}(2^k k N^4).
\end{align*}
Expressing this with positive constants $C_1, C_2$, the cost function is
\begin{equation*}
    T(k) = C_1 N^3 \log N + C_2 2^k k N^4.
\end{equation*}
For large bandlimits $N$, the behavior is dominated by the second term, specifically the factor $h(k) = k 2^k$. We analyze the monotonicity of $h(k)$ for real $k > 0$ using its logarithmic derivative
\begin{equation*}
    \frac{d}{dk} \ln h(k) = \frac{d}{dk} (k \ln 2 + \ln k) = \ln 2 + \frac{1}{k}.
\end{equation*}
Since $\ln 2 > 0$ and $1/k > 0$ for all $k > 0$, the derivative is strictly positive. This proves that $h(k)$ is a strictly increasing function on the domain.\\%

Consequently, to minimize the computational burden, we must select the minimal valid integer for the recursion depth, which is
\begin{equation*}
    k_{\min} = 1.
\end{equation*}
Substituting $k=1$ into the total complexity expression, the dominant term becomes $2^1 \cdot 1 \cdot N^4$, and the total cost simplifies to
\begin{equation*}
    T(1) = \mathcal{O}(N^3 \log N) + \mathcal{O}(2 N^4) = \boldsymbol{\mathcal{O}(N^4)}. \qedhere
\end{equation*}
\end{proof}

It is also instructive to analyze the asymptotic behavior if the recursion depth is not kept constant but is instead allowed to grow logarithmically with respect to $N$. As shown below, this strategy fails to improve the overall complexity.

\begin{cor}
Attempting to reduce the exponential factor $2^{k}$ to a polylogarithmic factor by choosing $k$ dependent on $N$ such that $2^{k} \approx \log N$ (which implies $k \approx \log_2\log N$) leads to a less efficient algorithm. With this choice, the dominant term becomes
\begin{equation*}
    2^{k} k N^{4} \approx (\log N)(\log_2\log N)N^{4} = \boldsymbol{\mathcal{O}(N^{4}\log N\log\log N)}.
\end{equation*}
This complexity is asymptotically strictly greater than the $\mathcal{O}(N^4)$ bound achieved with constant $k=1$.
\end{cor}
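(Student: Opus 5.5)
The plan is to plug the choice $k \approx \log_2\log N$ into the cost function $T(k) = C_1 N^3\log N + C_2\, 2^k k N^4$ from the proof of Theorem \ref{TEO_FIN}, which combines Propositions \ref{P_1} and \ref{P_2}, and then compare the result with $T(1)$.

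\textbf{Step 1: integrality.} Since $k$ must be a positive integer, I will take $k = \lceil \log_2\log N\rceil$. Then
\[
\log N \le 2^k < 2\log N
\qquad\text{and}\qquad
\log_2\log N \le k < \log_2\log N + 1 .
\]
Multiplying these bounds gives
\[
2^k k = \Theta\bigl((\log N)(\log\log N)\bigr) \quad\text{as } N\to\infty .
\]
It follows that the second-stage cost is $\Theta(N^4\log N\log\log N)$. The first-stage cost $C_1N^3\log N$ is of strictly lower order, so it is absorbed, and the total is $T(k)=\Theta(N^4\log N\log\log N)$. This is the displayed $\mathcal{O}$ bound.

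\textbf{Step 2: strictness.} The claim ``asymptotically strictly greater'' needs a lower bound, not just an upper one. Here I use the paper's model from Proposition \ref{P_2}, in which the cost is taken to be exactly proportional to $2^k k N^4$ with the positive constant $C_2$. Under that model,
\[
\frac{T(\lceil\log_2\log N\rceil)}{T(1)} \;\ge\; \frac{C_2\,(\log N)(\log_2\log N)\,N^4}{C_1N^3\log N + 2C_2N^4}\;\longrightarrow\;\infty .
\]
So the cost is $\omega(N^4)$, which proves strictness. As a consistency check, this also follows from the monotonicity of $h(k)=k2^k$ established in Theorem \ref{TEO_FIN}: since $k(N)\to\infty$, we get $h(k(N))/h(1)\to\infty$.

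\textbf{Main obstacle.} The only delicate point is that the result is stated with $\mathcal{O}$, which is an upper bound only. The strict comparison therefore has to rest on the cost model of Proposition \ref{P_2} being tight, meaning the algorithm really performs about $2^k$ terms of $kN$ work each for each of the $\Theta(N^3)$ coefficients. I will state this explicitly as the interpretation of the bound, namely that the counts are $\Theta$ rather than $\mathcal{O}$. I will also note that the logarithm base only affects constants.
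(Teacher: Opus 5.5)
Your proposal is correct and follows the same route as the paper. The paper gives no separate proof; the argument is the substitution of $2^k\approx\log N$ and $k\approx\log_2\log N$ into the dominant term $C_2\,2^k k N^4$ of the cost $T(k)$ from Theorem~\ref{TEO_FIN}. Your additions make that informal argument rigorous without changing it: taking $k=\lceil\log_2\log N\rceil$ yields two-sided $\Theta$ bounds, and you point out that the strict comparison with $\mathcal{O}(N^4)$ requires reading $T(k)=C_1N^3\log N+C_2\,2^k k N^4$ as an exact count, which the paper assumes tacitly.
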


\section{Computational Gap Between the Direct FT and the FFT on SU(2)}\label{sec:ftvsfft}

To emphasize the drastic improvement in computational efficiency offered by the fast Fourier transform (FFT) on SU(2), we compare its complexity with that of the direct method (FT). Both methods aim to compute all Fourier coefficients $\hat{f}(l)_{mn}$ up to a given bandlimit $N$.\\

The computational complexity of each method is given by
\begin{itemize}
    \item \textbf{Direct method (FT):} $\mathcal{O}(N^6)$
    \item \textbf{Fast algorithm (FFT-based):} $\mathcal{O}(N^3 \log N) +\mathcal{O}( 2^k k N^4)$.
\end{itemize}

For a comparative analysis, we must identify the dominant term in the FFT-based algorithm's complexity. As $N \to \infty$, the $N^4$ term grows asymptotically faster than $N^3 \log N$. Therefore, assuming $k$ is a non-zero constant, the complexity is dominated by the $\mathcal{O}(N^4)$ term. 

To quantify the performance gap, we calculate the approximate number of operations for various bandlimits, where $N$ is a power of two. The results are presented in Table \ref{tab:comparison}.

\begin{table}[H]
\centering
\caption{Comparison between the number of operations required by the direct FT on SU(2) ($\boldsymbol{O(N^6)}$) and the proposed FFT-based method for various values of $k$.}
\label{tab:comparison}
\begin{tabularx}{\textwidth}{|c|X|X|X|}
\hline
\rowcolor{red!15}
\textbf{Bandlimit $N$} & \textbf{FT Operations ($\boldsymbol{O(N^6)}$)} & \textbf{FFT,  ($\boldsymbol{O(N^4)}$), $k=1$} & \textbf{FFT, $k=\log_2 \log_2 N$ } \\
\hline

        \rowcolor{gray!10} 
$2^{10} = 1024$ & $\approx 1.15 \times 10^{18}$ & $\approx 1.07 \times 10^{12}$ & $\approx 3.55 \times 10^{13}$ \\
\rowcolor{gray!10}
$2^{12} = 4096$ & $\approx 4.72 \times 10^{21}$ & $\approx 2.82 \times 10^{14}$ & $\approx 1.21 \times 10^{16}$ \\
$2^{14} = 16384$ & $\approx 1.94 \times 10^{25}$ & $\approx 7.44 \times 10^{16}$ & $\approx 3.96 \times 10^{18}$ \\
\rowcolor{gray!10}
$2^{16} = 65536$ & $\approx 7.92 \times 10^{28}$ & $\approx 1.96 \times 10^{19}$ & $\approx 1.25 \times 10^{21}$ \\
\hline
\end{tabularx}
\end{table}

To better illustrate the practical implications of these differences, consider a processor speed of one gigaflop per second ($10^9$ operations/sec).  
For a practical bandlimit $N = 2^{10} = 1024$

\begin{itemize}
    \item The \textbf{direct FT} would require
    $$
    \frac{(2^{10})^6 \ \text{ops}}{10^9 \ \text{ops/sec}} 
    = \frac{1.15 \times 10^{18} \ \text{ops}}{10^9 \ \text{ops/sec}} 
    \approx 1.15 \times 10^{9} \ \text{seconds} \approx \textbf{36.5 years}.
    $$
    \item The \textbf{FFT-based method, $k=1$} would require
    $$
    \frac{(2^{10})^4 \ \text{ops}}{10^9 \ \text{ops/sec}} 
    = \frac{1.07 \times 10^{12} \ \text{ops}}{10^9 \ \text{ops/sec}} 
    \approx 1.07 \times 10^{3} \ \text{seconds} \approx \textbf{18 minutes}.
    $$
    \item The \textbf{FFT-based method, $k=\log_2 \log_2 N$} would require
    $$
    \frac{(2^{10})^4 \cdot (10) \cdot (3.32) \ \text{ops}}{10^9 \ \text{ops/sec}} 
    = \frac{3.55 \times 10^{13} \ \text{ops}}{10^9 \ \text{ops/sec}} 
    \approx 3.55 \times 10^{4} \ \text{seconds} \approx \textbf{9.86 hours}.
    $$
\end{itemize}

This corrected comparison shows that, for moderate $N$, the fast algorithm with a small recursion depth ($k=1$) can reduce the computation time from decades to minutes. Even with $k=\log_2 \log_2 N$, the runtime drops from decades to just a few hours for $
N=1024$. These estimates ignore memory access costs and cache effects, which affect both methods similarly and do not change the asymptotic comparison.\\



To visualize the magnitude of the computational improvement, Figure \ref{fig:ftvsfft} presents a comparison of the theoretical floating-point operations (FLOPs) required for both methods. The comparison is plotted on a logarithmic scale to highlight the difference in polynomial complexity classes.

\begin{figure}[!ht]
    \centering
    \begin{tikzpicture}
        \begin{loglogaxis}[
            width=0.85\textwidth,
            height=7cm,
            xlabel={Bandwidth $N$},
            ylabel={Theoretical Operations (FLOPs)},
            grid=both,
            minor grid style={gray!25},
            major grid style={gray!50},
            legend pos=north west,
            domain=4:64,
            samples=25,
            title={Complexity Analysis: Direct vs. FFT Variants},
            cycle list name=color list,
            legend cell align={left},
            legend style={font=\small}
        ]
        
        \addplot+[line width=1.2pt, mark=square, red] {x^6};
        \addlegendentry{Direct FT $\mathcal{O}(N^6)$}
        
        \addplot+[line width=1.2pt, mark=o, blue] {x^4};
        \addlegendentry{Base Complexity $\mathcal{O}(N^4)$}

        \addplot+[line width=1.2pt, mark=triangle, green!60!black, dashed] 
            {x^4 * (ln(x)/ln(2)) * (ln(ln(x)/ln(2))/ln(2))};
        \addlegendentry{Detailed $\mathcal{O}(N^4 \log N \log \log N)$}
        
        \end{loglogaxis}
    \end{tikzpicture}
    \caption{\textbf{Theoretical Complexity Comparison.} The red line shows the prohibitive scaling of the direct computation ($\mathcal{O}(N^6)$). The blue line represents the asymptotic $\mathcal{O}(N^4)$ complexity. The green dashed line accounts for the logarithmic factors in the recursive steps, following $\mathcal{O}(N^4 \log N \log \log N)$, which remains significantly more efficient than the direct method for large $N$.}\label{fig:ftvsfft}
\end{figure}

The graphical results confirm the substantial advantage of the proposed algorithm. In a log-log plot, the slope of the curve corresponds to the exponent of the complexity class. The direct method exhibits a steep slope ($\approx 6$), rendering it computationally intractable for large values of $N$ typically required in high-resolution quantum modeling. \\%

In contrast, our FFT implementation follows a significantly gentler slope ($\approx 4$). This reduction of two orders of magnitude in the exponent implies that for a bandwidth of $N=64$, the proposed method is theoretically orders of magnitude more efficient than the direct approach. This efficiency is crucial for applications in \textit{Mathematical Modelling} and opens new avenues for research in quantum computing. Since the classical FFT on $\mathrm{SU}(2)$ shares fundamental structural parallels with the Quantum Fourier Transform (QFFT), our algorithm serves as a vital structural prototype. Understanding these explicit classical recursions is a prerequisite for designing efficient quantum circuits and for developing advanced QFFT algorithms required in quantum simulation and quantum information processing.




\bibliographystyle{plain}

\bibliography{bib-Delgado-Lp-2016}

\end{document}